\documentclass[conference]{IEEEtran}

\IEEEoverridecommandlockouts
\usepackage{float}

\usepackage{amsmath}
\usepackage{amssymb}
\usepackage{amsfonts}
\usepackage{amsthm}
\usepackage{mathtools}
\usepackage{bm}
\usepackage{mathrsfs}
\usepackage{xcolor}
\usepackage{graphicx}
\usepackage{float}
\usepackage{caption}
\usepackage{subcaption}
\usepackage{booktabs}
\usepackage{array}
\usepackage{multirow}
\usepackage{comment}
\usepackage{algorithmic}

\usepackage{cite}
\usepackage{url}

\usepackage{textcomp}
\usepackage{xcolor}

\usepackage{titlesec}
\usepackage{etoolbox}

\usepackage[
    colorlinks=true,
    linkcolor=blue,
    citecolor=blue,
    urlcolor=blue
]{hyperref}

\def\BibTeX{%
    {\rm B\kern-.05em%
    {\sc i\kern-.025em b}%
    \kern-.08em%
    T\kern-.1667em%
    \lower.7ex\hbox{E}%
    \kern-.125emX}%
}

\titlespacing*{\section}
{0pt}
{0.8pt plus 0.2pt minus 0.2pt}
{0.15pt}

\titlespacing*{\subsection}
{0pt}
{0.6pt plus 0.2pt minus 0.2pt}
{0.10pt}

\titlespacing*{\subsubsection}
{0pt}
{0.5pt plus 0.2pt minus 0.2pt}
{0.10pt}

\makeatletter
\g@addto@macro\@floatboxreset{%
    \setlength{\abovecaptionskip}{1pt}%
    \setlength{\belowcaptionskip}{0pt}%
}
\makeatother

\newtheoremstyle{compactthm}
{0.3pt}
{0.3pt}
{\itshape}
{0pt}
{\bfseries}
{.}
{0.30em}
{}

\newtheoremstyle{compactdef}
{0.3pt}
{0.3pt}
{\normalfont}
{0pt}
{\bfseries}
{.}
{0.30em}
{}

\theoremstyle{compactthm}

\newtheorem{theorem}{Theorem}
\newtheorem{lemma}{Lemma}

\theoremstyle{compactdef}

\newtheorem{assumption}{Assumption}
\newtheorem{definition}{Definition}

\BeforeBeginEnvironment{theorem}{\vspace{-1.5pt}}
\AfterEndEnvironment{theorem}{\vspace{-1.5pt}}

\BeforeBeginEnvironment{lemma}{\vspace{-1.5pt}}
\AfterEndEnvironment{lemma}{\vspace{-1.5pt}}

\BeforeBeginEnvironment{assumption}{\vspace{-1.5pt}}
\AfterEndEnvironment{assumption}{\vspace{-1.5pt}}

\BeforeBeginEnvironment{proposition}{\vspace{-1.5pt}}
\AfterEndEnvironment{proposition}{\vspace{-1.5pt}}

\BeforeBeginEnvironment{corollary}{\vspace{-1.5pt}}
\AfterEndEnvironment{corollary}{\vspace{-1.5pt}}

\BeforeBeginEnvironment{definition}{\vspace{-1.5pt}}
\AfterEndEnvironment{definition}{\vspace{-1.5pt}}

\BeforeBeginEnvironment{remark}{\vspace{-1.5pt}}
\AfterEndEnvironment{remark}{\vspace{-1.5pt}}

\BeforeBeginEnvironment{proof}{\vspace{-2.5pt}}
\AfterEndEnvironment{proof}{\vspace{-2.0pt}}

\AtBeginDocument{%
    \BeforeBeginEnvironment{IEEEproof}{\vspace{-2.5pt}}%
    \AfterEndEnvironment{IEEEproof}{\vspace{-2.0pt}}%
}

\BeforeBeginEnvironment{figure}{\vspace{-1pt}}
\AfterEndEnvironment{figure}{\vspace{-1.5pt}}

\BeforeBeginEnvironment{figure*}{\vspace{-1pt}}
\AfterEndEnvironment{figure*}{\vspace{-1.5pt}}

\BeforeBeginEnvironment{table}{\vspace{-1pt}}
\AfterEndEnvironment{table}{\vspace{-1.5pt}}

\BeforeBeginEnvironment{table*}{\vspace{-1pt}}
\AfterEndEnvironment{table*}{\vspace{-1.5pt}}

\newcommand{\bx}{\bm{x}}
\newcommand{\bu}{\bm{u}}
\newcommand{\bfv}{\bm{f}}
\newcommand{\bg}{\bm{g}}
\newcommand{\bW}{\bm{W}}
\newcommand{\bphi}{\bm{\phi}}
\newcommand{\bsigma}{\bm{\sigma}}
\newcommand{\bR}{\bm{R}}

\begin{document}


\title{
Event-Triggered Practical Fixed-Time Integral Reinforcement Learning
for Unknown Nonlinear Systems
}


\author{
Tien Dat Vu,
My Nguyen Bach,
and Minh Doan%
\thanks{
T. D. Vu and M. Doan are with the Faculty of Mechanical Engineering,
Ho Chi Minh City University of Technology (HCMUT),
Vietnam National University Ho Chi Minh City (VNU-HCM),
Ho Chi Minh City, Vietnam
(e-mails: dat.vuv@hcmut.edu.vn; minh.doan@hcmut.edu.vn).
}%
\thanks{
M. N. Bach is with the Department of Mathematics and Statistics,
University of New Mexico,
Albuquerque, NM, USA
(e-mail: ngocmynguyenbach@gmail.com).
}%
\thanks{
Corresponding author: Minh Doan
(e-mail: minh.doan@hcmut.edu.vn).
}
}

\maketitle


\begin{abstract}
This paper develops an event-triggered fixed-time integral reinforcement learning framework for optimal control of unknown nonlinear systems. An integral data-driven identifier is first used to reconstruct the unknown dynamics, after which an inverse-optimal formulation is employed to construct a fixed-time running cost. A learning law satisfying the practical fixed-time property is then derived. Previously collected data, or data obtained during a finite excitation interval, are stored in an experience-replay buffer and incorporated into the weight-update law. This avoids the persistent-excitation condition, which is often difficult to satisfy in practical operation. To reduce communication and control updates, an event-triggered mechanism is introduced. The paper shows that, under the event-triggered implementation, the closed-loop system still achieves practical fixed-time stability, while the proposed triggering rule guarantees the exclusion of Zeno behavior. Finally, a nonlinear example is presented to verify the theoretical results developed in the paper.
\end{abstract}


\begin{IEEEkeywords}
Adaptive dynamic programming,
integral reinforcement learning,
practical fixed-time stability,
event-triggered control,
integral concurrent learning.
\end{IEEEkeywords}


\section{Introduction}
\label{sec:introduction}

Modern cyber--physical systems (CPSs) are becoming increasingly complex
in their structure, dynamics, and interaction between physical,
computational, and communication components, making an accurate
first-principles model increasingly difficult to obtain
\cite{Wan2021CPS,Lee2015CPS}. At the same time, stricter timing,
performance, and communication requirements have become important
design considerations in networked control systems
\cite{Lee2015CPS,Wan2021CPS}. Optimal control provides a systematic
framework for handling such demanding performance objectives
\cite{Wan2021CPS}. However, nonlinear optimal control is fundamentally
limited by the Hamilton--Jacobi--Bellman (HJB) equation, which generally
does not admit an explicit analytical solution, while unknown system
dynamics further prevent direct construction of the optimal feedback law
\cite{Doya2000,Vamvoudakis2010}. When the dynamics are unknown,
guaranteeing convergence within a uniform time bound independent of the
initial condition becomes particularly difficult
\cite{Polyakov2012,Kokolakis2025Predefined}. Moreover, when sensing and
actuation are connected through a communication network, reducing
unnecessary control transmissions must also be considered without
sacrificing closed-loop performance
\cite{Lee2015CPS,Han2024ETIRL}.

Reinforcement learning (RL) and adaptive dynamic programming (ADP) have emerged
as data-driven approaches for approximating the solution of the HJB
equation without solving it analytically
\cite{Doya2000,Vrabie2009,Vamvoudakis2010}. In particular, integral
reinforcement learning (IRL) replaces the differential Bellman equation
with an integral relation constructed from measured trajectories, thereby
reducing the dependence of the learning process on unknown drift dynamics
\cite{Modares2014}. In parallel, fixed-time and predefined-time optimal
stabilization have provided a systematic way to impose convergence-time
requirements that are independent of the initial condition
\cite{Polyakov2012,Jimenez2017Predefined,Kokolakis2025Predefined}.
Kokolakis \emph{et al.} further developed fixed- and predefined-time
critic-learning laws using current and recorded data, showing that
sufficiently rich stored information can replace the persistent-excitation
requirement
\cite{Kokolakis2023FixedTime,Kokolakis2025Predefined}. These ideas also
motivate an inverse-optimal viewpoint, in which a prescribed Lyapunov
decay is used to construct a compatible running cost and optimal feedback
law
\cite{Jimenez2017Predefined,Kokolakis2025Predefined}. For an unknown
system, however, such a construction first requires the drift
information; in this work, the drift is therefore approximated by a
neural basis representation and its unknown weights are recovered from
integral data through a finite-data least-squares problem under a
full-row-rank condition (See Eq.~\eqref{eq:weight_identification}).

Several related developments address only parts of this problem.
Modares \emph{et al.} combined IRL with experience replay for partially
unknown continuous-time systems and used recorded data to relax the
persistent-excitation requirement
\cite{Modares2014}. Han \emph{et al.} developed event-triggered IRL for
unknown constrained nonlinear systems by combining system identification,
experience replay, and event-triggered actor--critic learning, but
fixed-time critic convergence was not established
\cite{Han2024ETIRL}. Hong \emph{et al.} combined IRL with dynamic
event-triggered adaptive dynamic programming for partially unknown
nonlinear systems and excluded Zeno behavior, while the resulting
closed-loop and learning errors were characterized through robust or
ultimate-bounded stability rather than fixed-time convergence
\cite{Hong2025DETADP}. Zhang \emph{et al.} incorporated fixed-time
prescribed performance into event-triggered critic-only optimal tracking,
but the neural and closed-loop signals were again established as uniformly
ultimately bounded
\cite{Zhang2026ETFixedTime}. Kokolakis \emph{et al.} achieved
predefined-time critic learning and optimal stabilization using current
and replayed data; however, the HJB residual used by the learning law
depends explicitly on the system dynamics, and event-triggered
communication was not considered
\cite{Kokolakis2025Predefined}.

These results leave a gap between time-critical optimal learning,
unknown-dynamics reconstruction, and communication-efficient
implementation. In particular, an inverse-optimal fixed-time construction
cannot be directly implemented when the drift required by the optimality
conditions is unavailable, whereas existing event-triggered IRL schemes
for unknown nonlinear systems do not simultaneously provide practical
fixed-time critic convergence and practical fixed-time closed-loop
stability
\cite{Han2024ETIRL,Hong2025DETADP,Kokolakis2025Predefined,
Zhang2026ETFixedTime}.

To address this gap, this paper contributes an event-triggered fixed-time IRL
framework for unknown nonlinear systems. First, the unknown drift is
reconstructed from integral data through a neural approximation and a
finite-data least-squares identifier, after which the identified model is
used to construct an inverse-optimal running cost associated with a
designer-selected value function and fixed-time decay law. Second, a
critic-only IRL law combines current integral data with informative
experience-replay data, avoids the persistent-excitation requirement, and
establishes practical fixed-time convergence of the critic weights while
retaining the value-function approximation residual. Third, an
event-triggered sample-and-hold implementation is coupled with the
learning dynamics, and a composite Lyapunov analysis establishes practical
fixed-time closed-loop stability together with a strictly positive
inter-event-time bound that excludes Zeno behavior.


\section{Problem Formulation and Data-Driven System Identification}
\label{sec:problem_identification}

\subsection{Problem Formulation}




Consider the nonlinear control-affine system
\begin{equation}
    \dot{\bx}
    =
    \bfv(\bx)
    +
    \bg(\bx)\bu,
    \label{eq:plant}
\end{equation}
where \(\bx\in\mathbb R^n\) is the state, \(\bu\in\mathbb R^m\) is the control input, \(\bfv:\mathbb R^n\rightarrow\mathbb R^n\) is an unknown nonlinear drift, and \(\bg:\mathbb R^n\rightarrow\mathbb R^{n\times m}\) is known. The objective is to learn the unknown drift from measured state--input data and construct an optimal feedback policy that drives the state to the origin within a uniform finite settling time.

\begin{definition}[Admissible policy~\cite{AbuKhalaf2005}]
\label{def:admissible_secure_policy}
A continuous policy \(\bu_c:\Omega\rightarrow\mathcal U\) is admissible
on \(\Omega\), denoted by \(\bu_c\in\Psi(\Omega)\), if
\(\bu_c(\bm0)=\bm0\), the resulting closed-loop system is
asymptotically stable on \(\Omega\) in the absence of exogenous signals,
and the associated infinite-horizon cost is finite
(See Eq.~\eqref{eq:cost_function}).
\end{definition}

\begin{assumption}
\label{ass:plant}
Let \(\Omega\subset\mathbb R^n\) be compact and contain the origin in its interior. The unknown drift \(\bfv(\bx)\) and known input matrix \(\bg(\bx)\) are locally Lipschitz on an open neighborhood of \(\Omega\), with \(\bfv(\bm0)=\bm0\). For every admissible input considered herein, the corresponding solution of \eqref{eq:plant} exists uniquely while it remains in \(\Omega\).
\end{assumption}

\begin{lemma}[Fixed-time stability~\cite{Polyakov2012}]
\label{lem:fixed_time}
Let \(V(\bx)\) be a positive-definite continuously differentiable function satisfying \(\dot V\le-aV^\mu-bV^\nu\), where \(a>0\), \(b>0\), and \(0<\mu<1<\nu\). Then, the origin is fixed-time stable and its settling time satisfies
\begin{equation}
    T_{\max}
    \le
    \frac{1}{a(1-\mu)}
    +
    \frac{1}{b(\nu-1)}.
    \label{eq:fixed_time_bound}
\end{equation}
\end{lemma}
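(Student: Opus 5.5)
The plan is a comparison argument on the scalar function $v(t)=V(\bx(t))$ along any solution. Positive definiteness and $\dot V\le -aV^\mu-bV^\nu\le 0$ give $v$ nonincreasing. Since $\dot V$ is strictly negative whenever $V>0$, $v$ strictly decreases until it reaches $0$. Once $v=0$, the differential inequality and $v\ge 0$ keep it at $0$. Lyapunov stability of the origin follows from $\dot V\le 0$ in the standard way. What remains is to show that $v$ hits zero by a time $T$ bounded uniformly in $\bx(0)$, and then that $V=0$ forces $\bx=\bm 0$ by positive definiteness.

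For the time estimate, I will split the trajectory into two phases according to whether $v\ge 1$ or $v\le 1$.

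\textbf{Phase 1} ($v\ge 1$). I drop the $a$-term and use $\dot v\le -b v^\nu$. Separating variables and integrating $v^{-\nu}\,dv\le -b\,dt$ from $v(0)$ down to $1$ gives a duration
\[
t_1\le \frac{1-v(0)^{1-\nu}}{b(\nu-1)}\le \frac{1}{b(\nu-1)},
\]
and this holds for every initial value $v(0)$. This is the key point: the superlinear term makes the escape from the region $\{V\ge1\}$ uniform in the initial condition.

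\textbf{Phase 2} ($v\le 1$). I drop the $b$-term and use $\dot v\le -a v^\mu$. Integrating $v^{-\mu}\,dv\le -a\,dt$ from $1$ (or from $v(0)$ if $v(0)<1$) down to $0$ gives a duration
\[
t_2\le \frac{1}{a(1-\mu)}.
\]
Adding the two phases yields \eqref{eq:fixed_time_bound}.

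The main technical obstacle is making the separation-of-variables step rigorous. We only have an inequality, and $v^{-\mu}$ is singular at $0$. To handle this, I will apply the chain rule to $w=v^{1-\mu}$ on intervals where $v>0$, which gives $\dot w\le -a(1-\mu)$, and similarly for $v^{1-\nu}$ in Phase 1. I then conclude by integration and continuity, using that $v$ cannot stay positive beyond the computed time.

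For the statement to be well posed, I will also assume that solutions exist forward in time, with uniqueness not needed for the bound. This is consistent with Assumption~\ref{ass:plant} in the paper's setting. Finally, if $V$ is also radially unbounded the result is global; otherwise it is local on a sublevel set contained in the domain.
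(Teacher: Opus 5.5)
Your proposal is correct and follows essentially the same route as the paper. The paper does not prove this lemma separately (it is cited from Polyakov), but at the end of the proof of Theorem~\ref{thm:composite_event_triggered_stability} it derives the identical bound by your split at level $1$, keeping only the $\mu$-term below $1$ and only the $\nu$-term above $1$; your chain-rule step with $w=v^{1-\mu}$ (and $v^{1-\nu}$) makes rigorous the formal separation $dt\le -d\mathcal J/[\alpha_T C_1\mathcal J^\mu+\alpha_T C_2\mathcal J^\nu]$ used there.
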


\subsection{Neural Approximation of the Unknown Dynamics}

Since \(\bfv(\bx)\) is unavailable for control design, it is first reconstructed from measured data. By the universal approximation property~\cite{Kokolakis2025Predefined}, each continuous drift on the compact set \(\Omega\) can be represented as
\begin{equation}
    \bfv(\bx)
    =
    \bW_f^{*\top}\bsigma_f(\bx)
    +
    \bm{\varepsilon}_f(\bx),
    \label{eq:f_approximation}
\end{equation}
where \(\bsigma_f(\bx)\in\mathbb R^{N_f}\) is a prescribed vector of continuously differentiable basis functions satisfying
\(
\bsigma_f(\bm 0)=\bm 0
\), \(\bW_f^*\in\mathbb R^{N_f\times n}\) is the unknown ideal weight matrix, and \(\bm{\varepsilon}_f(\bx)\in\mathbb R^n\) is the approximation error. On \(\Omega\), \(\|\bsigma_f(\bx)\|\le\bar\sigma_f\) and \(\|\bm{\varepsilon}_f(\bx)\|\le\bar\varepsilon_f\) for some finite constants \(\bar\sigma_f>0\) and \(\bar\varepsilon_f>0\).

The identified drift is subsequently represented by
\(
\hat{\bfv}(\bx)
=
\hat{\bW}_f^\top\bsigma_f(\bx)
\),
where
\(
\hat{\bW}_f
\)
denotes the estimated weight matrix associated with the prescribed basis
vector
\(
\bsigma_f(\bx)
\).

\subsection{Integral Data-Driven Identification}
\label{subsec:integral_identification}

Let
\(
\tau_0<\tau_1<\cdots
\)
denote the sampling instants used to construct integral identification
data. Substituting the approximation of the unknown drift
\(
\bfv(\bx)
=
\bW_f^{*\top}\bsigma_f(\bx)
+
\bm{\varepsilon}_f(\bx)
\)
from \eqref{eq:f_approximation} into \eqref{eq:plant} and integrating over
\(
[\tau_\ell,\tau_{\ell+1}]
\)
gives
\(
\bx(\tau_{\ell+1})-\bx(\tau_\ell)
-
\int_{\tau_\ell}^{\tau_{\ell+1}}
\bg(\bx(\tau))\bu(\tau)\,d\tau
=
\bW_f^{*\top}
\int_{\tau_\ell}^{\tau_{\ell+1}}
\bsigma_f(\bx(\tau))\,d\tau
+
\int_{\tau_\ell}^{\tau_{\ell+1}}
\bm{\varepsilon}_f(\bx(\tau))\,d\tau
\).

For each interval, define
\(
\bm y_{f,\ell}
:=
\bx(\tau_{\ell+1})
-
\bx(\tau_\ell)
-
\int_{\tau_\ell}^{\tau_{\ell+1}}
\bg(\bx(\tau))\bu(\tau)\,d\tau
\)
and
\(
\bm\phi_{f,\ell}
:=
\int_{\tau_\ell}^{\tau_{\ell+1}}
\bsigma_f(\bx(\tau))\,d\tau
\).
Define
\(
\bm d_{f,\ell}
:=
\int_{\tau_\ell}^{\tau_{\ell+1}}
\bm{\varepsilon}_f(\bx(\tau))\,d\tau
\).
From \eqref{eq:f_approximation}, each data segment satisfies
\(
\bm y_{f,\ell}
=
\bW_f^{*\top}\bm\phi_{f,\ell}
+
\bm d_{f,\ell}
\).

Collecting
\(
M_f
\)
data segments gives
\begin{equation}
    \bm Y_f
    =
    \bW_f^{*\top}\bm\Phi_f
    +
    \bm D_f,
    \label{eq:stacked_identification}
\end{equation}
where
\(
\bm Y_f
:=
[\bm y_{f,0},\ldots,\bm y_{f,M_f-1}]
\),
\(
\bm\Phi_f
:=
[\bm\phi_{f,0},\ldots,\bm\phi_{f,M_f-1}]
\),
and
\(
\bm D_f
:=
[\bm d_{f,0},\ldots,\bm d_{f,M_f-1}]
\).
The estimate
\(
\hat{\bW}_f
\)
is obtained by fitting
\(
\bW_f^\top\bm\Phi_f
\)
to the accumulated data
\(
\bm Y_f
\)
in the least-squares sense.

To measure the information accumulated from the collected data, define
the cumulative information matrix
\(
\bm{\mathcal S}_f(M)
:=
\sum_{\ell=0}^{M-1}
\bm\phi_{f,\ell}
\bm\phi_{f,\ell}^{\top}
\).
When a new data segment becomes available,
\(
\bm{\mathcal S}_f(M+1)
=
\bm{\mathcal S}_f(M)
+
\bm\phi_{f,M}\bm\phi_{f,M}^{\top}
\).
At
\(
M=M_f
\),
one has
\(
\bm{\mathcal S}_f(M_f)
=
\bm\Phi_f\bm\Phi_f^\top
\).

\begin{assumption}[Finite excitation]
\textcolor{red}{\label{ass:finite_excitation}}
The collected data are sufficiently informative such that there exist
a finite integer
\(
M_f\geq N_f
\)
and a constant
\(
\alpha_f>0
\)
satisfying
\(
\bm{\mathcal S}_f(M_f)
=
\sum_{\ell=0}^{M_f-1}
\bm\phi_{f,\ell}\bm\phi_{f,\ell}^{\top}
\succeq
\alpha_f\bm I
\).
Equivalently,
\(
\operatorname{rank}(\bm\Phi_f)=N_f
\).
\end{assumption}


The unknown weight matrix is estimated from the accumulated data by
solving the least-squares problem

\(
\hat{\bW}_f
=
\arg\min_{\bW_f}
\frac{1}{2}
\left\|
\bm Y_f-\bW_f^\top\bm\Phi_f
\right\|_F^2
\).
The corresponding normal equation is
\(
\left(
\bm Y_f-\hat{\bW}_f^\top\bm\Phi_f
\right)
\bm\Phi_f^\top
=
\bm0
\),
or equivalently,
\(
\hat{\bW}_f^\top
\bm\Phi_f\bm\Phi_f^\top
=
\bm Y_f\bm\Phi_f^\top
\).
Since
\(
\bm\Phi_f\bm\Phi_f^\top
\)
is nonsingular under Assumption~\ref{ass:finite_excitation}, the unique least-squares minimizer is

\begin{equation}
\begin{aligned}
    \hat{\bW}_f^\top
    &=
    \bm Y_f
    \bm\Phi_f^\top
    \left(
        \bm\Phi_f\bm\Phi_f^\top
    \right)^{-1}
    \\
    &=
    \left(
        \sum_{\ell=0}^{M_f-1}
        \bm y_{f,\ell}
        \bm\phi_{f,\ell}^{\top}
    \right)
    \left(
        \sum_{\ell=0}^{M_f-1}
        \bm\phi_{f,\ell}
        \bm\phi_{f,\ell}^{\top}
    \right)^{-1}.
\end{aligned}
\label{eq:weight_identification}
\end{equation}

\begin{lemma}[Bounded identification error]
\label{lem:identification_error}
Under Assumptions~\ref{ass:plant} and
\ref{ass:finite_excitation}, the identification error is uniformly
bounded on
\(
\Omega
\).
In particular, there exists a finite constant
\(
\Delta_f>0
\)
such that
\begin{equation}
    \left\|
        \bfv(\bx)-\hat{\bfv}(\bx)
    \right\|
    \leq
    \Delta_f,
    \qquad
    \forall\,\bx\in\Omega.
    \label{eq:identification_error_bound}
\end{equation}
\end{lemma}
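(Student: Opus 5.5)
The plan is to decompose the identification error into a weight-estimation part and the approximation residual. Using \eqref{eq:f_approximation} and \(\hat{\bfv}(\bx)=\hat{\bW}_f^\top\bsigma_f(\bx)\), we have
\[
\bfv(\bx)-\hat{\bfv}(\bx)=\bigl(\bW_f^{*}-\hat{\bW}_f\bigr)^{\top}\bsigma_f(\bx)+\bm{\varepsilon}_f(\bx),
\]
so that \(\|\bfv(\bx)-\hat{\bfv}(\bx)\|\le\|\tilde{\bW}_f\|\,\bar\sigma_f+\bar\varepsilon_f\) on \(\Omega\), where \(\tilde{\bW}_f:=\bW_f^*-\hat{\bW}_f\). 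It therefore suffices to show that \(\tilde{\bW}_f\) is a finite matrix with an explicit bound.

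To bound \(\tilde{\bW}_f\), I would substitute the stacked relation \eqref{eq:stacked_identification} into the closed-form estimator \eqref{eq:weight_identification}:
\[
\hat{\bW}_f^\top=\bigl(\bW_f^{*\top}\bm\Phi_f+\bm D_f\bigr)\bm\Phi_f^\top\bigl(\bm\Phi_f\bm\Phi_f^\top\bigr)^{-1}
=\bW_f^{*\top}+\bm D_f\bm\Phi_f^\top\bigl(\bm\Phi_f\bm\Phi_f^\top\bigr)^{-1}.
\]
Hence \(\tilde{\bW}_f^\top=-\bm D_f\bm\Phi_f^\top(\bm\Phi_f\bm\Phi_f^\top)^{-1}\), and Assumption~\ref{ass:finite_excitation} gives \(\|(\bm\Phi_f\bm\Phi_f^\top)^{-1}\|\le 1/\alpha_f\).

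The remaining step is to bound \(\bm D_f\) and \(\bm\Phi_f\). This requires the recorded trajectories to remain in \(\Omega\) during the data intervals, which I take to be implicit in Assumption~\ref{ass:plant}: the data are collected along solutions in \(\Omega\). With \(h_\ell:=\tau_{\ell+1}-\tau_\ell\), the bounds \(\|\bsigma_f\|\le\bar\sigma_f\) and \(\|\bm\varepsilon_f\|\le\bar\varepsilon_f\) give \(\|\bm\phi_{f,\ell}\|\le h_\ell\bar\sigma_f\) and \(\|\bm d_{f,\ell}\|\le h_\ell\bar\varepsilon_f\). Consequently
\[
\|\bm D_f\|_F\le\bar\varepsilon_f\Bigl(\textstyle\sum_\ell h_\ell^2\Bigr)^{1/2},\qquad
\|\bm\Phi_f\|_F\le\bar\sigma_f\Bigl(\textstyle\sum_\ell h_\ell^2\Bigr)^{1/2}.
\]
Since \(M_f\) is finite and each \(h_\ell\) is finite, this yields
\[
\|\tilde{\bW}_f\|\le\frac{\bar\varepsilon_f\bar\sigma_f\sum_{\ell}h_\ell^2}{\alpha_f}.
\]
Therefore
\[
\Delta_f:=\frac{\bar\sigma_f^2\,\bar\varepsilon_f\sum_{\ell=0}^{M_f-1}h_\ell^2}{\alpha_f}+\bar\varepsilon_f
\]
is a valid constant, and it is positive and independent of \(\bx\).

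The main obstacle is justifying that the data trajectories stay in \(\Omega\), so that the bounds on \(\bsigma_f\) and \(\bm\varepsilon_f\) apply inside the integrals. I would handle this by stating that the data collection is performed with inputs keeping the state in \(\Omega\), as Assumption~\ref{ass:plant} presumes. Otherwise, I would enlarge to a compact set containing the finitely many recorded trajectory segments, on which the continuous functions remain bounded. All other steps are routine norm manipulations.
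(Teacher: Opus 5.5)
Your proposal is correct and follows the paper's proof: you substitute the stacked relation into the least-squares formula to get $\hat{\bW}_f^\top=\bW_f^{*\top}+\bm D_f\bm\Phi_f^\top\bm{\mathcal S}_f^{-1}(M_f)$, then bound the error by the weight-error norm times $\bar\sigma_f$ plus $\bar\varepsilon_f$. The paper stops at $\Delta_f:=\|\bm D_f\bm\Phi_f^\top\bm{\mathcal S}_f^{-1}(M_f)\|\bar\sigma_f+\bar\varepsilon_f$, since this is a fixed finite data matrix, so your further explicit bound via $\alpha_f$ and the window lengths $h_\ell$ is a valid but optional refinement; it also requires the data to lie in $\Omega$, which the paper assumes implicitly when it applies \eqref{eq:f_approximation} along the recorded trajectories.
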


\begin{proof}
From \eqref{eq:stacked_identification} and
\eqref{eq:weight_identification}, the full-row-rank property of
\(
\bm\Phi_f
\)
gives
\(
\hat{\bW}_f^\top
=
\bW_f^{*\top}
+
\bm D_f\bm\Phi_f^\top
\bm{\mathcal S}_f^{-1}(M_f)
\).
Hence,
\(
\bfv(\bx)-\hat{\bfv}(\bx)
=
-
\bm D_f\bm\Phi_f^\top
\bm{\mathcal S}_f^{-1}(M_f)
\bsigma_f(\bx)
+
\bm{\varepsilon}_f(\bx)
\).
Using the boundedness of
\(
\bsigma_f(\bx)
\)
and
\(
\bm{\varepsilon}_f(\bx)
\)
on
\(
\Omega
\),
one obtains
\(
\left\|
\bfv(\bx)-\hat{\bfv}(\bx)
\right\|
\leq
\left\|
\bm D_f\bm\Phi_f^\top
\bm{\mathcal S}_f^{-1}(M_f)
\right\|
\bar\sigma_f
+
\bar\varepsilon_f
\).
Therefore,
\eqref{eq:identification_error_bound}
holds with
\(
\Delta_f
:=
\left\|
\bm D_f\bm\Phi_f^\top
\bm{\mathcal S}_f^{-1}(M_f)
\right\|
\bar\sigma_f
+
\bar\varepsilon_f
\).
\end{proof}

\section{Fixed-Time Reinforcement Learning Optimal Control}
\label{sec:fixed_time_rl}

\subsection{Fixed-Time Optimal-Control Formulation}

Let \(V_d:\Omega\rightarrow\mathbb R_{\geq0}\) be a continuously differentiable positive-definite function satisfying \(V_d(\bm 0)=0\). The desired closed-loop behavior is prescribed by
\begin{equation}
    \dot V_d(\bx)
    =
    -aV_d^{\gamma_1}(\bx)
    -
    bV_d^{\gamma_2}(\bx),
    \qquad
    a,b>0,\quad 0<\gamma_1<1<\gamma_2.
    \label{eq:desired_fixed_time_decay}
\end{equation}

The function \(V_d\) is freely selected by the designer to encode the desired closed-loop behavior. 
According to Lemma~\ref{lem:fixed_time}, any feedback law satisfying \eqref{eq:desired_fixed_time_decay} renders the origin fixed-time stable with a settling-time upper bound independent of the initial condition.

Consider the performance index

\begin{equation}
\begin{aligned}
    J(\bx_0,\bu)
    ={}&
    \int_0^\infty
    \left[
        \bm L_1(\bx)\bu
        +
        L_2(\bx)
        +
        \bu^\top\bR\bu
    \right]dt,
    \\
    &\qquad
    \bR=\bR^\top>0.
\end{aligned}
\label{eq:cost_function}
\end{equation}

where \(\bm L_1:\Omega\rightarrow\mathbb R^{1\times m}\) and \(L_2:\Omega\rightarrow\mathbb R\) are to be constructed such that \(V_d\) is the optimal value function and the corresponding optimal policy satisfies \eqref{eq:desired_fixed_time_decay}.

For the identified model \(\dot{\bx}=\hat{\bfv}(\bx)+\bg(\bx)\bu\), the Hamiltonian associated with \eqref{eq:cost_function} is
\begin{equation}
\begin{aligned}
    H(\bx,\bu,\nabla V_d)
    ={}&
    \bm L_1(\bx)\bu
    +
    L_2(\bx)
    +
    \bu^\top\bR\bu
    \\
    &+
    \nabla V_d^\top(\bx)\hat{\bfv}(\bx)
    +
    \nabla V_d^\top(\bx)\bg(\bx)\bu .
\end{aligned}
\label{eq:hamiltonian_fixed_time}
\end{equation}
Equivalently, \(H=L_2+\nabla V_d^\top\hat{\bfv}+\bu^\top\bR\bu+\left(\bm L_1+\nabla V_d^\top\bg\right)\bu\).

The HJB equation is
\begin{equation}
    \min_{\bu}
    H(\bx,\bu,\nabla V_d)
    =
    0.
    \label{eq:hjb_fixed_time}
\end{equation}
Since \(\bR>0\), \(\frac{\partial^2 H}{\partial\bu^2}=2\bR>0\), and therefore the unique minimizing policy satisfies \(\frac{\partial H}{\partial\bu}=\bm L_1^\top(\bx)+2\bR\bu+\bg^\top(\bx)\nabla V_d(\bx)=\bm 0\). Hence, \(2\bR\bu^*=-\left[\bm L_1^\top(\bx)+\bg^\top(\bx)\nabla V_d(\bx)\right]\), which gives
\begin{equation}
    \bu^*(\bx)
    =
    -\frac{1}{2}\bR^{-1}
    \left[
        \bm L_1^\top(\bx)
        +
        \bg^\top(\bx)\nabla V_d(\bx)
    \right].
    \label{eq:optimal_control_general}
\end{equation}

\subsection{Reverse Construction of the Fixed-Time Optimal Cost}

Along \(\dot{\bx}=\hat{\bfv}(\bx)+\bg(\bx)\bu\), the derivative of \(V_d\) is \(\dot V_d=\nabla V_d^\top(\bx)\dot{\bx}=\nabla V_d^\top(\bx)\hat{\bfv}(\bx)+\nabla V_d^\top(\bx)\bg(\bx)\bu\). Imposing \eqref{eq:desired_fixed_time_decay} gives \(\nabla V_d^\top(\bx)\hat{\bfv}(\bx)+\nabla V_d^\top(\bx)\bg(\bx)\bu=-aV_d^{\gamma_1}(\bx)-bV_d^{\gamma_2}(\bx)\), or equivalently, \(\nabla V_d^\top(\bx)\bg(\bx)\bu=-\nabla V_d^\top(\bx)\hat{\bfv}(\bx)-aV_d^{\gamma_1}(\bx)-bV_d^{\gamma_2}(\bx)\).

Define \(\sigma_d(\bx):=\nabla V_d^\top(\bx)\hat{\bfv}(\bx)+aV_d^{\gamma_1}(\bx)+bV_d^{\gamma_2}(\bx)\) and \(D_d(\bx):=\nabla V_d^\top(\bx)\bg(\bx)\bR^{-1}\bg^\top(\bx)\nabla V_d(\bx)\). Thus, the required condition becomes \(\nabla V_d^\top(\bx)\bg(\bx)\bu=-\sigma_d(\bx)\).


Since infinitely many control inputs \(\bu\) may satisfy
\(
\nabla V_d^\top(\bx)\bg(\bx)\bu
=
-\sigma_d(\bx)
\),
we select the one minimizing the weighted control effort
\(
\bu^\top\bR\bu
\).
Accordingly, consider
\(
\min_{\bu}\;\bu^\top\bR\bu
\)
subject to
\(
\nabla V_d^\top(\bx)\bg(\bx)\bu
+
\sigma_d(\bx)
=
0
\).
The corresponding Lagrangian is
\(
\mathcal L_c(\bu,\lambda_c)
=
\bu^\top\bR\bu
+
\lambda_c
[
\nabla V_d^\top(\bx)\bg(\bx)\bu
+
\sigma_d(\bx)
]
\).
The stationarity condition with respect to \(\bu\) gives
\(
\partial\mathcal L_c/\partial\bu
=
2\bR\bu
+
\lambda_c
\bg^\top(\bx)\nabla V_d(\bx)
=
\bm0
\),
and therefore
\(
\bu
=
-\frac{\lambda_c}{2}
\bR^{-1}
\bg^\top(\bx)\nabla V_d(\bx)
\).
Substituting this expression into the constraint yields
\(
-\frac{\lambda_c}{2}
\nabla V_d^\top(\bx)
\bg(\bx)\bR^{-1}\bg^\top(\bx)
\nabla V_d(\bx)
+
\sigma_d(\bx)
=
0
\).
Using
\(
D_d(\bx)
:=
\nabla V_d^\top(\bx)
\bg(\bx)\bR^{-1}\bg^\top(\bx)
\nabla V_d(\bx)
\),
one obtains
\(
\lambda_c
=
2\sigma_d(\bx)/D_d(\bx)
\).
Hence, the unique \(\bR\)-weighted minimum-effort control satisfying the
prescribed fixed-time decay is


\begin{equation}
\begin{aligned}
    \bu_c(\bx)
    ={}&
    -
    \frac{
        \nabla V_d^\top(\bx)\hat{\bfv}(\bx)
        +
        aV_d^{\gamma_1}(\bx)
        +
        bV_d^{\gamma_2}(\bx)
    }{
        \nabla V_d^\top(\bx)
        \bg(\bx)\bR^{-1}\bg^\top(\bx)
        \nabla V_d(\bx)
    }
    \\
    &\qquad\times
    \bR^{-1}
    \bg^\top(\bx)
    \nabla V_d(\bx).
\end{aligned}
\label{eq:continuous_fixed_time_controller}
\end{equation}

The following condition ensures that \eqref{eq:continuous_fixed_time_controller} is well defined.

\begin{assumption}[Inverse-optimal feasibility]
\label{ass:inverse_optimal_feasibility}
For every \(\bx\in\Omega\setminus\{\bm 0\}\), \(\nabla V_d^\top(\bx)\bg(\bx)\bR^{-1}\bg^\top(\bx)\nabla V_d(\bx)>0\).
\end{assumption}

Define
\(
\bu_c(\bm 0)=\bm 0
\).
Since
\(
\bsigma_f(\bm 0)=\bm 0
\),
the identified drift satisfies
\(
\hat{\bfv}(\bm 0)=\bm 0
\).
Hence,
\[
\hat{\bfv}(\bm 0)+\bg(\bm 0)\bu_c(\bm 0)=\bm 0,
\]
so the origin is an equilibrium of the identified closed-loop system.


For every
\(
\bx\in\Omega\setminus\{\bm 0\}
\),
\[
\nabla V_d^\top\bg\bu_c
=
-\frac{\sigma_d}{D_d}
\nabla V_d^\top\bg\bR^{-1}\bg^\top\nabla V_d
=
-\sigma_d.
\] 

Therefore, for every
\(
\bx\in\Omega\setminus\{\bm 0\}
\),
\[
\dot V_d
=
\nabla V_d^\top\hat{\bfv}
+
\nabla V_d^\top\bg\bu_c
=
\nabla V_d^\top\hat{\bfv}
-
\sigma_d
=
-aV_d^{\gamma_1}-bV_d^{\gamma_2}.
\]

At \(\bx=\bm 0\), both sides are zero.

For every
\(
\bx\in\Omega\setminus\{\bm 0\}
\),
we now construct \(\bm L_1(\bx)\) such that the Hamiltonian-minimizing policy \(\bu^*\) coincides with \(\bu_c\).
From \eqref{eq:optimal_control_general},
\(
\bu^*
=
-\frac{1}{2}\bR^{-1}
\left[
\bm L_1^\top+\bg^\top\nabla V_d
\right]
\),
whereas
\(
\bu_c
=
-\frac{\sigma_d}{D_d}
\bR^{-1}\bg^\top\nabla V_d
\).
Imposing
\(
\bu^*=\bu_c
\)
gives
\(
-\frac{1}{2}\bR^{-1}
\left[
\bm L_1^\top+\bg^\top\nabla V_d
\right]
=
-\frac{\sigma_d}{D_d}
\bR^{-1}\bg^\top\nabla V_d
\).
Multiplying both sides by
\(
-2\bR
\)
yields
\(
\bm L_1^\top+\bg^\top\nabla V_d
=
\frac{2\sigma_d}{D_d}
\bg^\top\nabla V_d
\),
so that
\(
\bm L_1^\top
=
\left(
\frac{2\sigma_d}{D_d}-1
\right)
\bg^\top\nabla V_d
\).

For every
\(
\bx\in\Omega\setminus\{\bm 0\}
\),
expanding
\(
\sigma_d
\)
and
\(
D_d
\)
gives

\(
\begin{aligned}
\bm L_1^\top(\bx)
={}&
\left[
\frac{
2\left[
\nabla V_d^\top(\bx)\hat{\bfv}(\bx)
+
aV_d^{\gamma_1}(\bx)
+
bV_d^{\gamma_2}(\bx)
\right]
}{
\nabla V_d^\top(\bx)
\bg(\bx)\bR^{-1}
\bg^\top(\bx)
\nabla V_d(\bx)
}
-1
\right]
\\
&\qquad\times
\bg^\top(\bx)\nabla V_d(\bx)
\end{aligned}
\).

It remains to determine
\(
L_2(\bx)
\).
Substitution of
\(
\bu^*
=
-\frac{1}{2}\bR^{-1}
\left[
\bm L_1^\top+\bg^\top\nabla V_d
\right]
\)
into the Hamiltonian gives the minimized Hamiltonian
\(
H^*
=
L_2
+
\nabla V_d^\top\hat{\bfv}
-
\frac{1}{4}
\left[
\bm L_1^\top+\bg^\top\nabla V_d
\right]^\top
\bR^{-1}
\left[
\bm L_1^\top+\bg^\top\nabla V_d
\right]
\).
Using
\(
\bm L_1^\top+\bg^\top\nabla V_d
=
\frac{2\sigma_d}{D_d}
\bg^\top\nabla V_d
\),
the quadratic term becomes
\(
\frac{1}{4}
\left[
\bm L_1^\top+\bg^\top\nabla V_d
\right]^\top
\bR^{-1}
\left[
\bm L_1^\top+\bg^\top\nabla V_d
\right]
=
\frac{1}{4}
\left(
\frac{2\sigma_d}{D_d}
\right)^2
\nabla V_d^\top
\bg
\bR^{-1}
\bg^\top
\nabla V_d
=
\frac{\sigma_d^2}{D_d^2}
D_d
=
\frac{\sigma_d^2}{D_d}
\).
Hence,
\(
H^*
=
L_2
+
\nabla V_d^\top\hat{\bfv}
-
\frac{\sigma_d^2}{D_d}
\).
The HJB condition
\(
H^*=0
\)
gives
\(
L_2
=
-\nabla V_d^\top\hat{\bfv}
+
\frac{\sigma_d^2}{D_d}
\).
Therefore,
\(
\begin{aligned}
L_2(\bx)
={}&
-\nabla V_d^\top(\bx)\hat{\bfv}(\bx)
\\
&+
\frac{
\left[
\nabla V_d^\top(\bx)\hat{\bfv}(\bx)
+
aV_d^{\gamma_1}(\bx)
+
bV_d^{\gamma_2}(\bx)
\right]^2
}{
\nabla V_d^\top(\bx)
\bg(\bx)\bR^{-1}
\bg^\top(\bx)
\nabla V_d(\bx)
}
\end{aligned}
\).

.

\subsection{Critic-Only Integral Reinforcement Learning}

After constructing \(\bm L_1(\bx)\) and \(L_2(\bx)\), define
\(\ell(\bx,\bu):=\bm L_1(\bx)\bu+L_2(\bx)+\bu^\top\bR\bu\).
The optimal value function is
\begin{equation}
    V^*(\bx_0)
    =
    \min_{\bu}
    \int_0^\infty
    \ell(\bx(t),\bu(t))\,dt .
    \label{eq:optimal_value_rl}
\end{equation}

For
\(\dot{\bx}=\bfv(\bx)+\bg(\bx)\bu\), the Hamiltonian is
\(
H(\bx,\bu,\nabla V^*)
=
\ell(\bx,\bu)
+
\nabla V^{*\top}(\bx)
[\bfv(\bx)+\bg(\bx)\bu]
\).
The HJB equation satisfies
\(
0=\min_{\bu}H(\bx,\bu,\nabla V^*)
\).
Since
\(
\partial^2H/\partial\bu^2=2\bR>0
\),
the minimizing policy follows from
\(
\partial H/\partial\bu
=
\bm L_1^\top(\bx)
+
2\bR\bu
+
\bg^\top(\bx)\nabla V^*(\bx)
=
\bm0
\).
Hence,
\begin{equation}
    \bu^*(\bx)
    =
    -\frac{1}{2}
    \bR^{-1}
    \left[
        \bm L_1^\top(\bx)
        +
        \bg^\top(\bx)\nabla V^*(\bx)
    \right].
    \label{eq:critic_only_optimal_policy}
\end{equation}

On the compact set \(\Omega\), let
\begin{equation}
    V^*(\bx)
    =
    \bW^{*\top}\bphi(\bx)
    +
    \varepsilon(\bx),
    \label{eq:critic_value_approximation}
\end{equation}
where
\(\bphi(\bx)\in\mathbb R^{N_c}\),
\(\bW^*\in\mathbb R^{N_c}\), and
\(\varepsilon(\bx)\) denotes the approximation error. Then
\(
\nabla V^*(\bx)
=
(\nabla\bphi(\bx))^\top\bW^*
+
\nabla\varepsilon(\bx)
\).
The critic estimate is
\begin{equation}
    \hat V(\bx)
    =
    \hat{\bW}^{\top}\bphi(\bx),
    \qquad
    \nabla\hat V(\bx)
    =
    (\nabla\bphi(\bx))^\top\hat{\bW}.
    \label{eq:critic_estimate}
\end{equation}
Thus, the critic-generated policy is
\begin{equation}
    \hat{\bu}(\bx)
    =
    -\frac{1}{2}
    \bR^{-1}
    \left[
        \bm L_1^\top(\bx)
        +
        \bg^\top(\bx)
        (\nabla\bphi(\bx))^\top
        \hat{\bW}
    \right].
    \label{eq:critic_only_policy}
\end{equation}




Let \(\boldsymbol\mu(\bx)\) denote the target policy and let
\(\bu_b(t)\) denote the behavior input used to generate the measured
trajectory. Since the controller is implemented in an event-triggered
sample-and-hold manner, the actual behavior input satisfies
\(
\bu_b(t)
=
\hat{\bu}(\bx(t_k),\hat{\bm W}(t_k))
\)
for
\(
t\in[t_k,t_{k+1})
\),
whereas the continuously evaluated target policy is
\(
\boldsymbol\mu(\bx(t))
=
\hat{\bu}(\bx(t),\hat{\bm W}(t))
\).
Thus,
\(
\bu_b(t)-\boldsymbol\mu(\bx(t))
=
\bm e_u(t)
\).

Then
\(
\dot{\bx}
=
\bfv(\bx)+\bg(\bx)\bu_b
\),
whereas the Bellman equation associated with
\(\boldsymbol\mu\) is
\(
0=
\ell(\bx,\boldsymbol\mu)
+
\nabla V^\top(\bx)
[\bfv(\bx)+\bg(\bx)\boldsymbol\mu]
\).
Hence,
\(
\nabla V^\top
[\bfv+\bg\boldsymbol\mu]
=
-\ell(\bx,\boldsymbol\mu)
\).

Using
\(
\bfv+\bg\bu_b
=
\bfv+\bg\boldsymbol\mu
+
\bg(\bu_b-\boldsymbol\mu)
\),
one obtains
\(
\dot V
=
\nabla V^\top[\bfv+\bg\bu_b]
=
-\ell(\bx,\boldsymbol\mu)
+
\nabla V^\top\bg(\bu_b-\boldsymbol\mu)
\).
Since
\(
\bu_b-\boldsymbol\mu
=
\bm e_u
\),
the event-triggered implementation enters the learning identity through
the term
\(
\nabla V^\top\bg\bm e_u
\).

Integrating over
\(
[t-T,t]
\)
gives
\(
0
=
V(\bx(t))
-
V(\bx(t-T))
+
\int_{t-T}^{t}
\ell\!\left(
    \bx(\tau),
    \boldsymbol\mu(\tau)
\right)d\tau
-
\int_{t-T}^{t}
\nabla V^\top(\bx(\tau))
\bg(\bx(\tau))
\left[
    \bu_b(\tau)
    -
    \boldsymbol\mu(\tau)
\right]d\tau
\).

From Eq.~\eqref{eq:critic_value_approximation} we have
\(
V^*(\bx(t))
-
V^*(\bx(t-T))
=
\bW^{*\top}
[
\bphi(\bx(t))
-
\bphi(\bx(t-T))
]
+
\varepsilon(\bx(t))
-
\varepsilon(\bx(t-T))
\).
Define
\(
\Delta\bphi(t)
:=
\bphi(\bx(t))
-
\bphi(\bx(t-T))
\).
Since
\(
\nabla V^{*\top}(\bx)
=
\bW^{*\top}\nabla\bphi(\bx)
+
\nabla\varepsilon^\top(\bx)
\),
substitution into the preceding integral Bellman relation gives
\(
0
=
\bW^{*\top}\Delta\bphi(t)
+
\varepsilon(\bx(t))
-
\varepsilon(\bx(t-T))
+
\int_{t-T}^{t}
\ell(\bx(\tau),\boldsymbol\mu(\tau))d\tau
-
\bW^{*\top}
\int_{t-T}^{t}
\nabla\bphi(\bx(\tau))
\bg(\bx(\tau))
[
\bu_b(\tau)-\boldsymbol\mu(\tau)
]d\tau
-
\int_{t-T}^{t}
\nabla\varepsilon^\top(\bx(\tau))
\bg(\bx(\tau))
[
\bu_b(\tau)-\boldsymbol\mu(\tau)
]d\tau
\).

Define
\begin{equation}
\begin{aligned}
    \boldsymbol\Omega(t)
    :={}&
    \Delta\bphi(t)
    \\
    &-
    \int_{t-T}^{t}
    \nabla\bphi(\bx(\tau))
    \bg(\bx(\tau))
    \left[
        \bu_b(\tau)
        -
        \boldsymbol\mu(\tau)
    \right]d\tau
\end{aligned}
\label{eq:integral_regressor}
\end{equation}
and
\begin{equation}
    \rho(t)
    :=
    \int_{t-T}^{t}
    \ell\!\left(
        \bx(\tau),
        \boldsymbol\mu(\tau)
    \right)d\tau .
    \label{eq:integral_cost_term}
\end{equation}

Using
\eqref{eq:integral_regressor}
and
\eqref{eq:integral_cost_term},
the integral Bellman equation becomes
\begin{equation}
    \bW^{*\top}\boldsymbol\Omega(t)
    +
    \rho(t)
    +
    \varepsilon_B(t)
    =
    0,
    \label{eq:integral_bellman_with_approximation}
\end{equation}
where the integral approximation residual is
\begin{equation}
\begin{aligned}
    \varepsilon_B(t)
    :={}&
    \varepsilon(\bx(t))
    -
    \varepsilon(\bx(t-T))
    \\
    &-
    \int_{t-T}^{t}
    \nabla\varepsilon^\top(\bx(\tau))
    \bg(\bx(\tau))
    \left[
        \bu_b(\tau)
        -
        \boldsymbol\mu(\tau)
    \right]d\tau .
\end{aligned}
\label{eq:integral_approximation_residual}
\end{equation}

For the critic estimate
\(
\hat{\bW}
\),
define the integral Bellman residual
\(
\xi(t)
:=
\hat{\bW}^{\top}(t)
\boldsymbol\Omega(t)
+
\rho(t)
\).

Let
\(
\tilde{\bW}
:=
\hat{\bW}-\bW^*
\).
Under exact approximation,
\(
\bW^{*\top}\boldsymbol\Omega+\rho=0
\),
and hence
\(
\xi
=
\hat{\bW}^{\top}\boldsymbol\Omega
-
\bW^{*\top}\boldsymbol\Omega
=
\tilde{\bW}^{\top}\boldsymbol\Omega
\).
More generally, when
\(
\varepsilon\neq0
\),
\eqref{eq:integral_bellman_with_approximation} gives
\(
\rho
=
-\bW^{*\top}\boldsymbol\Omega
-
\varepsilon_B
\),
and therefore
\(
\xi
=
\hat{\bW}^{\top}\boldsymbol\Omega
-
\bW^{*\top}\boldsymbol\Omega
-
\varepsilon_B
=
\tilde{\bW}^{\top}\boldsymbol\Omega
-
\varepsilon_B
\).

To normalize the regressor, define
\(
m(t)
:=
1+
\boldsymbol\Omega^\top(t)
\boldsymbol\Omega(t)
\)
and
\(
\bar{\boldsymbol\Omega}(t)
:=
\boldsymbol\Omega(t)/m(t)
\).
The normalized Bellman residual is
\(
e(t)
:=
\xi(t)/m(t)
=
[
\hat{\bW}^{\top}(t)\boldsymbol\Omega(t)
+
\rho(t)
]/m(t)
\).
Under exact approximation,
\(
e(t)
=
\tilde{\bW}^{\top}
\bar{\boldsymbol\Omega}(t)
\).
For the general approximation case, define
\(
\bar\varepsilon_B(t)
:=
\varepsilon_B(t)/m(t)
\).
Then
\(
e(t)
=
\tilde{\bW}^{\top}
\bar{\boldsymbol\Omega}(t)
-
\bar\varepsilon_B(t)
\),
or equivalently,
\(
\tilde{\bW}^{\top}
\bar{\boldsymbol\Omega}(t)
=
e(t)
+
\bar\varepsilon_B(t)
\).

Let
\(
M
\)
previously recorded integral samples be denoted by
\(
\{\boldsymbol\Omega_j,\rho_j\}_{j=1}^{M}
\).
For each stored sample, define
\(
m_j
:=
1+\boldsymbol\Omega_j^\top\boldsymbol\Omega_j
\),
\(
\bar{\boldsymbol\Omega}_j
:=
\boldsymbol\Omega_j/m_j
\),
and
\(
e_j
:=
[
\hat{\bW}^{\top}\boldsymbol\Omega_j
+
\rho_j
]/m_j
\).
For each recorded sample, define
\(
\varepsilon_{B,j}
\)
according to
\eqref{eq:integral_approximation_residual}
evaluated on the corresponding stored window and
\(
\bar\varepsilon_{B,j}
:=
\varepsilon_{B,j}/m_j
\).
Then
\(
e_j
=
\tilde{\bW}^{\top}
\bar{\boldsymbol\Omega}_j
-
\bar\varepsilon_{B,j}
\),
or equivalently,
\(
\tilde{\bW}^{\top}
\bar{\boldsymbol\Omega}_j
=
e_j
+
\bar\varepsilon_{B,j}
\).

Assume that the approximation residual is bounded on the compact
operating set, so that there exists
\(
\bar\varepsilon_B>0
\)
satisfying
\begin{equation}
    |\bar\varepsilon_B(t)|
    \leq
    \bar\varepsilon_B,
    \qquad
    |\bar\varepsilon_{B,j}|
    \leq
    \bar\varepsilon_B,
    \qquad
    j=1,\ldots,M.
    \label{eq:normalized_approximation_residual_bound}
\end{equation}

The recorded regressors are assumed to satisfy
\begin{equation}
    \sum_{j=1}^{M}
    \bar{\boldsymbol\Omega}_j
    \bar{\boldsymbol\Omega}_j^\top
    \succeq
    \alpha_{\mathrm{IE}}\bm I,
    \qquad
    \alpha_{\mathrm{IE}}>0 .
    \label{eq:critic_ie_condition}
\end{equation}
Equivalently,
\(
\lambda_{\min}
(
\sum_{j=1}^{M}
\bar{\boldsymbol\Omega}_j
\bar{\boldsymbol\Omega}_j^\top
)
\geq
\alpha_{\mathrm{IE}}>0
\).

Let
\(
0<\gamma_1<1<\gamma_2
\)
and define
\(
\operatorname{sig}^{\gamma}(s)
:=
|s|^\gamma\operatorname{sgn}(s)
\).
Consider the scalar two-power penalty
\(
\Psi(e)
:=
\frac{k_1}{\gamma_1+1}|e|^{\gamma_1+1}
+
\frac{k_2}{\gamma_2+1}|e|^{\gamma_2+1}
\),
where
\(
k_1,k_2>0
\).
Its derivative is
\(
\frac{\partial\Psi(e)}{\partial e}
=
k_1\operatorname{sig}^{\gamma_1}(e)
+
k_2\operatorname{sig}^{\gamma_2}(e)
\).

Since
\(
e(t)
=
[
\hat{\bW}^{\top}\boldsymbol\Omega(t)+\rho(t)
]/m(t)
\)
and by chain rule, one has
\(
\partial e(t)/\partial\hat{\bW}
=
\boldsymbol\Omega(t)/m(t)
=
\bar{\boldsymbol\Omega}(t)
\).
Therefore,
\(
\nabla_{\hat{\bW}}\Psi(e(t))
=
\bar{\boldsymbol\Omega}(t)
[
k_1\operatorname{sig}^{\gamma_1}(e(t))
+
k_2\operatorname{sig}^{\gamma_2}(e(t))
]
\).

Similarly,
\(
\partial e_j/\partial\hat{\bW}
=
\bar{\boldsymbol\Omega}_j
\),
and hence
\(
\nabla_{\hat{\bW}}\Psi(e_j)
=
\bar{\boldsymbol\Omega}_j
[
k_1\operatorname{sig}^{\gamma_1}(e_j)
+
k_2\operatorname{sig}^{\gamma_2}(e_j)
]
\).

Define the composite critic objective
\(
\mathcal J_W
:=
\Psi(e(t))
+
\sum_{j=1}^{M}
\Psi(e_j)
\).
Then
\(
\nabla_{\hat{\bW}}\mathcal J_W
=
\bar{\boldsymbol\Omega}(t)
[
k_1\operatorname{sig}^{\gamma_1}(e(t))
+
k_2\operatorname{sig}^{\gamma_2}(e(t))
]
+
\sum_{j=1}^{M}
\bar{\boldsymbol\Omega}_j
[
k_1\operatorname{sig}^{\gamma_1}(e_j)
+
k_2\operatorname{sig}^{\gamma_2}(e_j)
]
\).

Using the gradient flow
\(
\dot{\hat{\bW}}
=
-\bm{\Gamma}
\nabla_{\hat{\bW}}\mathcal J_W
\),
with
\(
\bm{\Gamma}
=
\bm{\Gamma}^{\top}
\succ0
\),

gives the critic update law
\begin{flalign}
\dot{\hat{\bW}}
={}&
-\bm{\Gamma}
\bar{\boldsymbol\Omega}(t)
\left[
    k_1
    \operatorname{sig}^{\gamma_1}(e(t))
    +
    k_2
    \operatorname{sig}^{\gamma_2}(e(t))
\right]
&&
\nonumber
\\
&-
\bm{\Gamma}
\sum_{j=1}^{M}
\bar{\boldsymbol\Omega}_j
\left[
    k_1
    \operatorname{sig}^{\gamma_1}(e_j)
    +
    k_2
    \operatorname{sig}^{\gamma_2}(e_j)
\right].
&&
\label{eq:fixed_time_critic_update}
\end{flalign}

\begin{theorem}[Practical fixed-time convergence under approximation error]
\label{thm:critic_practical_fixed_time_convergence}

Suppose that the recorded integral regressors satisfy
\eqref{eq:critic_ie_condition}, the normalized approximation residuals
satisfy \eqref{eq:normalized_approximation_residual_bound}, and the
critic weights are updated according to
\eqref{eq:fixed_time_critic_update}. Then there exist constants
\(
\bar C_{W1}>0
\),
\(
\bar C_{W2}>0
\),
and
\(
\Delta_W\geq0
\)
such that
\begin{equation}
    \dot V_W
    \leq
    -\bar C_{W1}V_W^{\mu_W}
    -
    \bar C_{W2}V_W^{\nu_W}
    +
    \Delta_W,
    \qquad
    0<\mu_W<1<\nu_W.
    \label{eq:critic_practical_fixed_time_inequality}
\end{equation}
Consequently, the critic-weight error converges in fixed time to a
bounded residual set. 
\end{theorem}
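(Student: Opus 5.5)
The plan is to take the Lyapunov candidate \(V_W:=\tfrac12\tilde{\bW}^\top\bm{\Gamma}^{-1}\tilde{\bW}\). The key facts are that \(\dot{\tilde{\bW}}=\dot{\hat{\bW}}\), because \(\bW^*\) is constant, and that \(\bar{\boldsymbol\Omega}\) and \(\bar{\boldsymbol\Omega}_j\) are bounded by \(1/2\), because \(\|\boldsymbol\Omega\|/(1+\|\boldsymbol\Omega\|^2)\le 1/2\).

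Differentiating along \eqref{eq:fixed_time_critic_update} and using \(\tilde{\bW}^\top\bar{\boldsymbol\Omega}_j=e_j+\bar\varepsilon_{B,j}\), and likewise for the current sample, gives
\[
\dot V_W=-\sum_{s}\bigl(e_s+\bar\varepsilon_{B,s}\bigr)\bigl[k_1\operatorname{sig}^{\gamma_1}(e_s)+k_2\operatorname{sig}^{\gamma_2}(e_s)\bigr],
\]
where \(s\) ranges over the current sample and the \(j=1,\ldots,M\) stored samples. Each term then splits into a negative part \(-k_1|e_s|^{\gamma_1+1}-k_2|e_s|^{\gamma_2+1}\) and a cross term \(-\bar\varepsilon_{B,s}[\cdots]\).

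For the cross terms I apply Young's inequality using \eqref{eq:normalized_approximation_residual_bound}:
\[
k_i\bar\varepsilon_B|e_s|^{\gamma_i}\le \tfrac{k_i}{2}|e_s|^{\gamma_i+1}+c_i\bar\varepsilon_B^{\gamma_i+1},
\]
with exponents \(\tfrac{\gamma_i+1}{\gamma_i}\) and \(\gamma_i+1\). This leaves
\[
\dot V_W\le-\tfrac{k_1}{2}\sum_s|e_s|^{\gamma_1+1}-\tfrac{k_2}{2}\sum_s|e_s|^{\gamma_2+1}+(M+1)(c_1\bar\varepsilon_B^{\gamma_1+1}+c_2\bar\varepsilon_B^{\gamma_2+1}).
\]
The current-sample negative terms may simply be dropped.

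The main obstacle is converting \(\sum_j|e_j|^{p}\) into a power of \(V_W\), since it is the stored data that give excitation via \eqref{eq:critic_ie_condition}. I will use \(e_j=\tilde{\bW}^\top\bar{\boldsymbol\Omega}_j-\bar\varepsilon_{B,j}\) together with \(|a-b|^p\ge 2^{1-p}|a|^p-|b|^p\) for \(p\ge1\), which holds by convexity, since \(|a|^p\le 2^{p-1}(|a-b|^p+|b|^p)\). This gives \(\sum_j|e_j|^p\ge 2^{1-p}\sum_j|\tilde{\bW}^\top\bar{\boldsymbol\Omega}_j|^p-M\bar\varepsilon_B^p\). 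The resulting \(\bar\varepsilon_B^p\) terms are absorbed into \(\Delta_W\).

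The two exponents are then handled separately. For \(p=\gamma_1+1\in(1,2)\), let \(q_j=|\tilde{\bW}^\top\bar{\boldsymbol\Omega}_j|^2\). Because \(p/2<1\), subadditivity gives \(\sum_j q_j^{p/2}\ge(\sum_j q_j)^{p/2}\ge(\alpha_{\mathrm{IE}}\|\tilde{\bW}\|^2)^{p/2}\). For \(p=\gamma_2+1>2\), I use the power-mean inequality \(\sum_j q_j^{p/2}\ge M^{1-p/2}(\sum_j q_j)^{p/2}\).

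Combining these with \(\|\tilde{\bW}\|^2\ge 2\lambda_{\min}(\bm{\Gamma})V_W\) yields the result with \(\mu_W=(\gamma_1+1)/2\in(1/2,1)\) and \(\nu_W=(\gamma_2+1)/2>1\). The constants are \(\bar C_{W1}=\tfrac{k_1}{2}2^{-\gamma_1}(2\alpha_{\mathrm{IE}}\lambda_{\min}(\bm{\Gamma}))^{\mu_W}\) and \(\bar C_{W2}=\tfrac{k_2}{2}2^{-\gamma_2}M^{1-\nu_W}(2\alpha_{\mathrm{IE}}\lambda_{\min}(\bm{\Gamma}))^{\nu_W}\). \(\Delta_W\) collects all the \(\bar\varepsilon_B\)-dependent constants and vanishes when \(\bar\varepsilon_B=0\).

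For the final claim, I invoke the standard practical fixed-time extension of Lemma~\ref{lem:fixed_time}. Outside the set where \(\bar C_{W1}V_W^{\mu_W}\ge\Delta_W/\theta\) or \(\bar C_{W2}V_W^{\nu_W}\ge\Delta_W/\theta\), for some \(\theta\in(0,1)\), one has \(\dot V_W\le-(1-\theta)(\bar C_{W1}V_W^{\mu_W}+\bar C_{W2}V_W^{\nu_W})\). Hence \(V_W\) reaches the residual set \(\{V_W\le\min\{(\Delta_W/(\theta\bar C_{W1}))^{1/\mu_W},(\Delta_W/(\theta\bar C_{W2}))^{1/\nu_W}\}\}\) within the time bound \eqref{eq:fixed_time_bound}, with \(a,b\) scaled by \(1-\theta\). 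This bound is independent of \(\tilde{\bW}(0)\).
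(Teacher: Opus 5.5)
Your proposal is correct and follows the paper's proof essentially step for step. It uses the same \(V_W=\tfrac12\tilde{\bW}^\top\bm\Gamma^{-1}\tilde{\bW}\), the same Young splitting with a factor \(\tfrac12\), the same dropping of current-sample terms, the same inequality \(|a-b|^r\ge 2^{1-r}|a|^r-|b|^r\), the same \(\ell_p\)-versus-\(\ell_2\) comparisons under \eqref{eq:critic_ie_condition}, and the same bound \(\|\tilde{\bW}\|^2\ge 2\lambda_{\min}(\bm\Gamma)V_W\), so you arrive at identical \(\bar C_{W1},\bar C_{W2}\). The differences are cosmetic: your residual threshold \(\min\{(\Delta_W/(\theta\bar C_{W1}))^{1/\mu_W},(\Delta_W/(\theta\bar C_{W2}))^{1/\nu_W}\}\) is valid but slightly more conservative than the paper's root of \(\bar C_{W1}V^{\mu_W}+\bar C_{W2}V^{\nu_W}=\Delta_W/\chi_W\), and ``outside the set where \dots'' should read ``on the set where \dots'' (that set is the complement of the residual set).
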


\begin{proof}

For the nonzero approximation error,
\(
e(t)
=
\tilde{\bm W}^{\top}
\bar{\boldsymbol\Omega}(t)
-
\bar\varepsilon_B(t)
\)
and
\(
e_j
=
\tilde{\bm W}^{\top}
\bar{\boldsymbol\Omega}_j
-
\bar\varepsilon_{B,j}
\).
Hence,
\(
\tilde{\bm W}^{\top}
\bar{\boldsymbol\Omega}(t)
=
e(t)+\bar\varepsilon_B(t)
\)
and
\(
\tilde{\bm W}^{\top}
\bar{\boldsymbol\Omega}_j
=
e_j+\bar\varepsilon_{B,j}
\).
Using
\(
V_W
=
\frac{1}{2}
\tilde{\bm W}^{\top}
\bm\Gamma^{-1}
\tilde{\bm W}
\)
and
\(
\dot{\tilde{\bm W}}
=
\dot{\hat{\bm W}}
\),
one again has
\(
\dot V_W
=
\tilde{\bm W}^{\top}
\bm\Gamma^{-1}
\dot{\hat{\bm W}}
\).
Substitution of
\eqref{eq:fixed_time_critic_update} gives
\(
\dot V_W
=
-
[e(t)+\bar\varepsilon_B(t)]
[
k_1\operatorname{sig}^{\gamma_1}(e(t))
+
k_2\operatorname{sig}^{\gamma_2}(e(t))
]
-
\sum_{j=1}^{M}
[e_j+\bar\varepsilon_{B,j}]
[
k_1\operatorname{sig}^{\gamma_1}(e_j)
+
k_2\operatorname{sig}^{\gamma_2}(e_j)
]
\).

Using
\(
s\operatorname{sig}^{\gamma}(s)
=
|s|^{\gamma+1}
\),
one obtains
\(
\dot V_W
=
-k_1|e(t)|^{\gamma_1+1}
-k_2|e(t)|^{\gamma_2+1}
-k_1\sum_{j=1}^{M}|e_j|^{\gamma_1+1}
-k_2\sum_{j=1}^{M}|e_j|^{\gamma_2+1}
-
\bar\varepsilon_B(t)
[
k_1\operatorname{sig}^{\gamma_1}(e(t))
+
k_2\operatorname{sig}^{\gamma_2}(e(t))
]
-
\sum_{j=1}^{M}
\bar\varepsilon_{B,j}
[
k_1\operatorname{sig}^{\gamma_1}(e_j)
+
k_2\operatorname{sig}^{\gamma_2}(e_j)
]
\).

For any
\(
\gamma>0
\),
Young's inequality gives
\(
|\bar\varepsilon|
|e|^\gamma
\leq
\frac{1}{2}|e|^{\gamma+1}
+
c_\gamma
|\bar\varepsilon|^{\gamma+1}
\),
where
\(
c_\gamma
:=
2^\gamma
\gamma^\gamma/
(\gamma+1)^{\gamma+1}
\).
Applying this inequality to the current and recorded approximation
terms, together with
\eqref{eq:normalized_approximation_residual_bound}, gives
\(
\dot V_W
\leq
-\frac{k_1}{2}|e(t)|^{\gamma_1+1}
-\frac{k_2}{2}|e(t)|^{\gamma_2+1}
-\frac{k_1}{2}\sum_{j=1}^{M}|e_j|^{\gamma_1+1}
-\frac{k_2}{2}\sum_{j=1}^{M}|e_j|^{\gamma_2+1}
+
\Delta_{\varepsilon W}
\),
where
\(
\Delta_{\varepsilon W}
:=
(M+1)
[
k_1c_{\gamma_1}
\bar\varepsilon_B^{\gamma_1+1}
+
k_2c_{\gamma_2}
\bar\varepsilon_B^{\gamma_2+1}
]
\).
Discarding the two nonpositive current-data terms yields
\(
\dot V_W
\leq
-\frac{k_1}{2}
\sum_{j=1}^{M}|e_j|^{\gamma_1+1}
-
\frac{k_2}{2}
\sum_{j=1}^{M}|e_j|^{\gamma_2+1}
+
\Delta_{\varepsilon W}
\).

Let
\(
p:=\gamma_1+1
\)
and
\(
q:=\gamma_2+1
\),
so that
\(
1<p<2<q
\).
Define
\(
z_j
:=
\bar{\boldsymbol\Omega}_j^\top
\tilde{\bm W}
\)
and
\(
\bm z
:=
[z_1,\ldots,z_M]^\top
=
\bm B\tilde{\bm W}
\).
Since
\(
e_j
=
z_j-\bar\varepsilon_{B,j}
\),
the inequality
\(
|a-b|^r
\geq
2^{1-r}|a|^r-|b|^r
\),
valid for
\(
r\geq1
\),
gives
\(
\sum_{j=1}^{M}|e_j|^p
\geq
2^{1-p}
\sum_{j=1}^{M}|z_j|^p
-
\sum_{j=1}^{M}
|\bar\varepsilon_{B,j}|^p
\)
and
\(
\sum_{j=1}^{M}|e_j|^q
\geq
2^{1-q}
\sum_{j=1}^{M}|z_j|^q
-
\sum_{j=1}^{M}
|\bar\varepsilon_{B,j}|^q
\).

From the IE condition,
\(
\|\bm z\|_2^2
=
\tilde{\bm W}^{\top}
\bm B^\top\bm B
\tilde{\bm W}
\geq
\alpha_{\mathrm{IE}}
\|\tilde{\bm W}\|^2
\).
For
\(
1<p<2
\),
\(
\|\bm z\|_p
\geq
\|\bm z\|_2
\),
and therefore
\(
\sum_{j=1}^{M}|z_j|^p
\geq
\alpha_{\mathrm{IE}}^{p/2}
\|\tilde{\bm W}\|^p
\).
For
\(
q>2
\),
\(
\|\bm z\|_q
\geq
M^{1/q-1/2}
\|\bm z\|_2
\),
and hence
\(
\sum_{j=1}^{M}|z_j|^q
\geq
M^{1-q/2}
\alpha_{\mathrm{IE}}^{q/2}
\|\tilde{\bm W}\|^q
\).

Using
\(
|\bar\varepsilon_{B,j}|
\leq
\bar\varepsilon_B
\),
one obtains
\(
\sum_{j=1}^{M}|e_j|^p
\geq
2^{1-p}
\alpha_{\mathrm{IE}}^{p/2}
\|\tilde{\bm W}\|^p
-
M\bar\varepsilon_B^p
\)
and
\(
\sum_{j=1}^{M}|e_j|^q
\geq
2^{1-q}
M^{1-q/2}
\alpha_{\mathrm{IE}}^{q/2}
\|\tilde{\bm W}\|^q
-
M\bar\varepsilon_B^q
\).
Substitution gives
\(
\dot V_W
\leq
-k_1 2^{-p}
\alpha_{\mathrm{IE}}^{p/2}
\|\tilde{\bm W}\|^p
-
k_2 2^{-q}
M^{1-q/2}
\alpha_{\mathrm{IE}}^{q/2}
\|\tilde{\bm W}\|^q
+
\Delta_W
\),
where
\(
\Delta_W
:=
\Delta_{\varepsilon W}
+
\frac{M}{2}
[
k_1\bar\varepsilon_B^p
+
k_2\bar\varepsilon_B^q
]
\).

Since
\(
V_W
\leq
[
2\lambda_{\min}(\bm\Gamma)
]^{-1}
\|\tilde{\bm W}\|^2
\),
one has
\(
\|\tilde{\bm W}\|^p
\geq
[
2\lambda_{\min}(\bm\Gamma)
]^{p/2}
V_W^{p/2}
\)
and
\(
\|\tilde{\bm W}\|^q
\geq
[
2\lambda_{\min}(\bm\Gamma)
]^{q/2}
V_W^{q/2}
\).
Therefore,
\(
\dot V_W
\leq
-\bar C_{W1}V_W^{p/2}
-
\bar C_{W2}V_W^{q/2}
+
\Delta_W
\),
where
\(
\bar C_{W1}
:=
k_1 2^{-p}
\alpha_{\mathrm{IE}}^{p/2}
[
2\lambda_{\min}(\bm\Gamma)
]^{p/2}
\)
and
\(
\bar C_{W2}
:=
k_2 2^{-q}
M^{1-q/2}
\alpha_{\mathrm{IE}}^{q/2}
[
2\lambda_{\min}(\bm\Gamma)
]^{q/2}
\).

Finally,
\(
\mu_W=p/2=(\gamma_1+1)/2
\)
and
\(
\nu_W=q/2=(\gamma_2+1)/2
\),
so that
\(
0<\mu_W<1<\nu_W
\).
Hence,
\(
\dot V_W
\leq
-\bar C_{W1}V_W^{\mu_W}
-
\bar C_{W2}V_W^{\nu_W}
+
\Delta_W
\),
which proves
\eqref{eq:critic_practical_fixed_time_inequality}.

For any
\(
0<\chi_W<1
\),
define
\(
V_{W,\Delta}>0
\)
as the unique positive solution of
\(
\bar C_{W1}V_{W,\Delta}^{\mu_W}
+
\bar C_{W2}V_{W,\Delta}^{\nu_W}
=
\Delta_W/\chi_W
\).
For every
\(
V_W\geq V_{W,\Delta}
\),
one has
\(
\Delta_W
\leq
\chi_W
[
\bar C_{W1}V_W^{\mu_W}
+
\bar C_{W2}V_W^{\nu_W}
]
\),
and therefore
\(
\dot V_W
\leq
-(1-\chi_W)
\bar C_{W1}V_W^{\mu_W}
-
(1-\chi_W)
\bar C_{W2}V_W^{\nu_W}
\).
Thus, by Theorem~\ref{lem:fixed_time} the critic-weight error enters the residual set
\(
V_W\leq V_{W,\Delta}
\)
within the uniformly bounded time
\[
    T_{W,\Delta}
    \leq
    \frac{1}
    {(1-\chi_W)\bar C_{W1}(1-\mu_W)}
    +
    \frac{1}
    {(1-\chi_W)\bar C_{W2}(\nu_W-1)}.
\]
When
\(
\bar\varepsilon_B=0
\),
one has
\(
\Delta_W=0
\),
and the exact fixed-time convergence result is recovered. This
completes the proof.
\end{proof}


\subsection{Event-Triggered Implementation and Composite Stability}
\label{subsec:event_triggered_composite}

Let
\(
0=t_0<t_1<t_2<\cdots
\)
denote the sequence of event instants. At each
\(
t_k
\),
the current state and critic estimate are used to update the critic-only
control policy, while the resulting control signal is held constant over
\(
[t_k,t_{k+1})
\):
\begin{equation}
    \bu(t)
    =
    \hat{\bu}
    \bigl(
        \bx(t_k),
        \hat{\bm W}(t_k)
    \bigr),
    \qquad
    t\in[t_k,t_{k+1}).
    \label{eq:event_held_control_composite}
\end{equation}

The corresponding event-induced implementation error is defined as
\(
\bm e_u(t)
:=
\hat{\bu}(
\bx(t_k),
\hat{\bm W}(t_k))
-
\hat{\bu}(
\bx(t),
\hat{\bm W}(t))
\).
Hence,
\(
\bm e_u(t_k)=\bm0
\)
and
\(
\bu(t)
=
\hat{\bu}(
\bx(t),
\hat{\bm W}(t))
+
\bm e_u(t)
\).
The event-triggered plant can therefore be written as
\(
\dot{\bx}
=
\bfv(\bx)
+
\bg(\bx)\hat{\bu}(\bx,\hat{\bm W})
+
\bg(\bx)\bm e_u
\).

Define
\begin{equation}
\begin{aligned}
    \bar V(t)
    :={}&
    V(\bx(t))
    +
    \eta
    \int_{t-T}^{t}
    V(\bx(\tau))\,d\tau,
    \\
    \mathcal J_x(t)
    :={}&
    \bar V(t),
    \\
    \mathcal J(t)
    :={}&
    \mathcal J_x(t)
    +
    \lambda_WV_W(t),
\end{aligned}
\label{eq:composite_et_lyapunov}
\end{equation}
where
\(
\eta>0
\)
and
\(
\lambda_W>0
\).
Since
\(
V_W(t)\geq0
\),
one has
\(
0\leq\mathcal J_x(t)\leq\mathcal J(t)
\).
The quantity
\(
\mathcal J_x(t)
\)
is directly computable from the measured state trajectory, whereas
\(
\mathcal J(t)
\)
is used for the composite stability analysis.

Let
\(
\mathcal G(\bx)
:=
\|
\bg^\top(\bx)\nabla V(\bx)
\|
\).
Since
\(
\Omega
\)
is compact and
\(
\mathcal G
\)
is continuous, there exists
\(
\bar G>0
\)
such that
\(
\mathcal G(\bx)\leq\bar G
\)
for every
\(
\bx\in\Omega
\).

For
\(
t\in[t_k,t_{k+1})
\),
define the accumulated event-error energy from the most recent event by
\(
\mathcal E_{u,k}(t)
:=
\int_{t_k}^{t}
\|\bm e_u(\tau)\|^2\,d\tau
\),
and define the composite event-error measure
\(
\Xi_u(t)
:=
\bar G\|\bm e_u(t)\|
+
\eta\bar G
\sqrt{
    (t-t_k)\mathcal E_{u,k}(t)
}
\).
Equivalently,
\(
\Xi_u(t)
=
\bar G\|\bm e_u(t)\|
+
\eta\bar G
\sqrt{
    (t-t_k)
    \int_{t_k}^{t}
    \|\bm e_u(\tau)\|^2\,d\tau
}
\).
Since
\(
\bm e_u(t_k)=\bm0
\)
and
\(
\mathcal E_{u,k}(t_k)=0
\),
one has
\(
\Xi_u(t_k)=0
\).

For prescribed constants
\(
0<\sigma<1
\)
and
\(
\delta_{\mathrm{ET}}>0
\),
the next event instant is selected according to
\begin{equation}
\begin{aligned}
    t_{k+1}
    :=
    \inf
    \Big\{
        t>t_k:\;&
        \Xi_u(t)
        \\
        &\geq
        \sigma
        \left[
            C_1\mathcal J_x^\mu(t)
            +
            C_2\mathcal J_x^\nu(t)
        \right]
        +
        \delta_{\mathrm{ET}}
    \Big\}.
\end{aligned}
\label{eq:composite_triggering_rule}
\end{equation}
Equivalently,
\begin{equation}
\begin{aligned}
    t_{k+1}
    :=
    \inf
    \Bigg\{
        t>t_k:\;&
        \bar G\|\bm e_u(t)\|
        +
        \eta\bar G
        \sqrt{
            (t-t_k)
            \int_{t_k}^{t}
            \|\bm e_u(\tau)\|^2\,d\tau
        }
        \\
        &\geq
        \sigma
        \left[
            C_1\mathcal J_x^\mu(t)
            +
            C_2\mathcal J_x^\nu(t)
        \right]
        +
        \delta_{\mathrm{ET}}
    \Bigg\}.
\end{aligned}
\label{eq:composite_triggering_rule_explicit}
\end{equation}
where
\(
C_1>0
\),
\(
C_2>0
\),

\(
\mu=\frac{\gamma_1+1}{2}
\)
and
\(
\nu=\frac{\gamma_2+1}{2}
\),
with \(0<\gamma_1<1<\gamma_2\), such that \(0<\mu<1<\nu\).

and
\(
\delta_{\mathrm{ET}}>0
\)
is a strictly positive triggering margin introduced to exclude Zeno
behavior.

\begin{theorem}[Practical fixed-time stability]
\label{thm:composite_event_triggered_stability}

Suppose that the preceding assumptions hold, with
\(
0<\mu<1<\nu
\).
Under \eqref{eq:composite_triggering_rule}, with
\(
0\leq\mathcal J_x(t)\leq\mathcal J(t)
\),
one has
\begin{equation}
\begin{aligned}
    \dot{\mathcal J}
    \leq{}&
    -(1-\sigma)C_1\mathcal J^\mu
    -(1-\sigma)C_2\mathcal J^\nu
    +\Delta_T.
\end{aligned}
\label{eq:composite_practical_ft_bound}
\end{equation}
where
\(
\Delta_T\geq0
\)
denotes the total residual term. Hence, the closed loop is practically
fixed-time stable. If
\(
\Delta_T=0
\),
exact fixed-time convergence is recovered.
\end{theorem}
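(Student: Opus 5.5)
I will differentiate the composite function $\mathcal J=\bar V+\lambda_W V_W$ along the event-triggered closed loop on each interval $[t_k,t_{k+1})$ and bound the result term by term. First, writing $\bu=\hat{\bu}(\bx,\hat{\bm W})+\bm e_u$ and $\hat{\bu}=\bu^*-\frac12\bR^{-1}\bg^\top(\nabla\bphi)^\top\tilde{\bm W}-\frac12\bR^{-1}\bg^\top\nabla\varepsilon$, I will take $V=V^*$ (equivalently $V_d$ up to the identification mismatch of Lemma~\ref{lem:identification_error}) and obtain
\[
\dot V=\nabla V^\top(\bfv+\bg\bu^*)-\tfrac12\nabla V^\top\bg\bR^{-1}\bg^\top\bigl[(\nabla\bphi)^\top\tilde{\bm W}+\nabla\varepsilon\bigr]+\nabla V^\top\bg\bm e_u .
\]
The first term equals $-aV^{\gamma_1}-bV^{\gamma_2}+\nabla V^\top(\bfv-\hat{\bfv})$ by the reverse construction \eqref{eq:desired_fixed_time_decay}, and the mismatch is bounded by $\|\nabla V\|\Delta_f$. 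The critic-coupling term is bounded by $\bar G\|\bR^{-1}\|\bar\phi'\bar g\|\tilde{\bm W}\|$ plus an $\varepsilon$-residual. The event term is bounded by $\bar G\|\bm e_u\|$, since $\mathcal G\le\bar G$.

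Next I will handle the integral part of $\bar V$. Its derivative is $\eta[V(\bx(t))-V(\bx(t-T))]=\eta\int_{t-T}^{t}\dot V\,d\tau$. Substituting the same decomposition, the event contribution becomes $\eta\int\nabla V^\top\bg\bm e_u$, which over the portion after $t_k$ is bounded by Cauchy--Schwarz by $\eta\bar G\sqrt{(t-t_k)\mathcal E_{u,k}(t)}$. This is precisely why $\Xi_u$ has its second term. The portion before $t_k$ (earlier intervals) is bounded using the triggering rule on those intervals, and is absorbed into the residual on the compact set. Combining, on $[t_k,t_{k+1})$ the rule \eqref{eq:composite_triggering_rule} gives $\Xi_u\le\sigma[C_1\mathcal J_x^\mu+C_2\mathcal J_x^\nu]+\delta_{\mathrm{ET}}$. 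Since $\mathcal J_x\le\mathcal J$ and the powers are monotone, this yields $\sigma C_1\mathcal J^\mu+\sigma C_2\mathcal J^\nu+\delta_{\mathrm{ET}}$.

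Then I will add $\lambda_W\dot V_W$ from Theorem~\ref{thm:critic_practical_fixed_time_convergence}. Its negative powers of $\|\tilde{\bm W}\|$ dominate the linear cross term $c\|\tilde{\bm W}\|$ via Young's inequality, leaving a constant. The remaining negative terms must then be expressed in terms of $\mathcal J$. I will need comparison bounds $V\ge$ class-$\mathcal K$ in $\bar V$, using $\int_{t-T}^tV\le T\sup V$ on $\Omega$. After that I will use the subadditivity $(x+y)^\mu\le x^\mu+y^\mu$ for $\mu<1$ and $(x+y)^\nu\le2^{\nu-1}(x^\nu+y^\nu)$ for $\nu>1$ to merge the state and weight decay terms into $-C_1\mathcal J^\mu-C_2\mathcal J^\nu$. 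Here $C_1,C_2$ are chosen no larger than the resulting minimum coefficients, with matching exponents $\mu=(\gamma_1+1)/2$, $\nu=(\gamma_2+1)/2$. Collecting $\Delta_f$, $\varepsilon$, $\Delta_W$, Young constants, $\delta_{\mathrm{ET}}$ and the delayed-window remainder into $\Delta_T$ gives \eqref{eq:composite_practical_ft_bound}.

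Finally, practical fixed-time stability follows exactly as at the end of the proof of Theorem~\ref{thm:critic_practical_fixed_time_convergence}. For $\chi\in(0,1)$, outside the level set where $(1-\sigma)\chi[C_1\mathcal J^\mu+C_2\mathcal J^\nu]=\Delta_T$, Lemma~\ref{lem:fixed_time} applies, and $\Delta_T=0$ recovers exact convergence.

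The main obstacle is the exponent mismatch. The state decay comes in powers $\gamma_1,\gamma_2$ of $V$, while the critic decay comes in powers $\mu_W,\nu_W$ of $V_W$, and the integral term $\eta\int V$ has no intrinsic decay at all. Matching these to the single pair $(\mu,\nu)$ is the hard step. I would first try to exploit that $\bar V\le(1+\eta T)\sup_{[t-T,t]}V$, and take $V_d$ homogeneous-like so that $\gamma_i$ and $\mu,\nu$ can be compared on bounded sets. Any leftover mismatch on compact $\Omega$ I would absorb into $\Delta_T$, at the cost of a larger residual.
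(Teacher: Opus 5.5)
\textbf{There is a genuine gap.} Most of your outline matches the paper's proof:
\begin{itemize}
\item decomposing $\hat{\bu}$ about $\bu^*$ and bounding the identification mismatch;
\item splitting the windowed event term at $t_k$ and applying Cauchy--Schwarz, which is exactly why $\Xi_u$ has its second term;
\item substituting the triggering rule using $\mathcal J_x\le\mathcal J$, merging by power-sum inequalities, and the residual-set/settling-time argument.
\end{itemize}
The gap is at the step you flag yourself: getting decay for the window component $\eta I_V$ of $\bar V$, where $I_V(t):=\int_{t-T}^{t}V(\bx(\tau))\,d\tau$.

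\textbf{Why your comparison fails.} The bound you propose, $V(\bx(t))\ge\kappa(\bar V(t))$ with $\kappa$ of class $\mathcal K$, is false. Whenever $V$ has decayed over the last window (e.g., the state has just entered a small neighborhood of the origin), $V(\bx(t))$ is nearly zero while $\eta I_V(t)$ is not. So the pointwise dissipation $-aV^{\gamma_1}-bV^{\gamma_2}$ at time $t$ cannot dominate $C_1\bar V^\mu$. Your fallbacks do not close this:
\begin{itemize}
\item Using $\int_{t-T}^{t}V\,d\tau\le TV_\Omega$ only gives $V(\bx(t))\ge\bar V(t)-\eta TV_\Omega$. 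A term of order $(\eta TV_\Omega)^\mu$, fixed by the size of $\Omega$, then enters $\Delta_T$.
\item The bound $\bar V\le(1+\eta T)\sup_{[t-T,t]}V$ involves a window supremum, whereas your dissipation is evaluated at the current instant. Linking them needs a derivative bound $|\dot V|\le L$, which you do not invoke. Even then you get only an offset comparison with an additive error of order $\eta LT^2$.
\end{itemize}

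\textbf{The missing idea.} The window term does have intrinsic decay, and it already appears in your computation. Substituting the pointwise estimate into $\eta[V(\bx(t))-V(\bx(t-T))]=\eta\int_{t-T}^{t}\dot V\,d\tau$ produces $-\eta a\int_{t-T}^{t}V^{\gamma_1}d\tau-\eta b\int_{t-T}^{t}V^{\gamma_2}d\tau$, which you discard. The paper turns these into powers of $I_V$:
\begin{itemize}
\item the derivative-limited window bound (Lemma~\ref{lem:derivative_limited_window_bound}) gives $\int_{t-T}^{t}V^{\gamma_1}d\tau\ge\rho_{\gamma_1,T}I_V^{\mu}$;
\item Jensen gives $\int_{t-T}^{t}V^{\gamma_2}d\tau\ge T^{1-\gamma_2}I_V^{\gamma_2}\ge T^{1-\gamma_2}(I_V^{\nu}-1)$.
\end{itemize}
Combined with the pointwise comparisons $V^{\gamma_1}\ge V_\Omega^{\gamma_1-\mu}V^\mu$ and $V^{\gamma_2}\ge V^\nu-1$ on $\Omega$, and the power-sum inequalities, this yields $-A_V\bar V^\mu-B_V\bar V^\nu$ plus constants. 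No homogeneity of $V_d$ is needed.

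\textbf{No critic-side mismatch.} Theorem~\ref{thm:critic_practical_fixed_time_convergence} gives $\mu_W=(\gamma_1+1)/2=\mu$ and $\nu_W=(\gamma_2+1)/2=\nu$ exactly, which is why the triggering exponents are chosen this way. The paper absorbs the cross term through $V_W^{1/2}\le1+V_W^\mu+V_W^\nu$ and $\lambda_W>\max\{\varrho_1/C_{W1},\varrho_2/C_{W2}\}$; your Young's-inequality variant is equivalent. It bounds the windowed critic terms by their suprema.
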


\begin{proof}

From
\(
\bar V(t)
=
V(\bx(t))
+
\eta
\int_{t-T}^{t}
V(\bx(\tau))\,d\tau
\),
the Leibniz rule gives
\(
\dot{\bar V}
=
\dot V(\bx(t))
+
\eta V(\bx(t))
-
\eta V(\bx(t-T))
\).
For the continuous critic-only implementation,
\(
\bu
=
\hat{\bu}(\bx,\hat{\bm W})
\),
and therefore
\(
\dot{\bx}
=
\bfv(\bx)
+
\bg(\bx)\hat{\bu}(\bx,\hat{\bm W})
\).
Hence,
\(
\dot V
=
\nabla V^\top(\bx)
[
\bfv(\bx)
+
\bg(\bx)\hat{\bu}(\bx,\hat{\bm W})
]
\),
and consequently
\(
\dot{\bar V}_c
=
\nabla V^\top(\bx)
[
\bfv(\bx)
+
\bg(\bx)\hat{\bu}(\bx,\hat{\bm W})
]
+
\eta V(\bx(t))
-
\eta V(\bx(t-T))
\).

Let the ideal optimal policy be
\(
\bu^*(\bx)
=
-\frac{1}{2}
\bR^{-1}
[
\bm L_1^\top(\bx)
+
\bg^\top(\bx)\nabla V^*(\bx)
]
\),
while the learned critic-only policy is
\(
\hat{\bu}(\bx)
=
-\frac{1}{2}
\bR^{-1}
[
\bm L_1^\top(\bx)
+
\bg^\top(\bx)
(\nabla\bphi(\bx))^\top
\hat{\bm W}
]
\).
Since
\(
V^*(\bx)
=
\bm W^{*\top}\bphi(\bx)
+
\varepsilon(\bx)
\),
one has
\(
\nabla V^*
=
(\nabla\bphi)^\top\bm W^*
+
\nabla\varepsilon
\).
Therefore,
\(
\nabla\hat V-\nabla V^*
=
(\nabla\bphi)^\top
(\hat{\bm W}-\bm W^*)
-
\nabla\varepsilon
=
(\nabla\bphi)^\top
\tilde{\bm W}
-
\nabla\varepsilon
\).
It follows that
\(
\hat{\bu}-\bu^*
=
-\frac{1}{2}
\bR^{-1}
\bg^\top
[
(\nabla\bphi)^\top
\tilde{\bm W}
-
\nabla\varepsilon
]
\).

Writing
\(
\hat{\bu}
=
\bu^*
+
(\hat{\bu}-\bu^*)
\)
gives
\(
\dot V
=
\nabla V^\top
[
\bfv+\bg\bu^*
]
+
\nabla V^\top\bg
(\hat{\bu}-\bu^*)
\).

The inverse-optimal fixed-time construction is established for the
identified model
\(
\dot{\bx}
=
\hat{\bfv}(\bx)
+
\bg(\bx)\bu^*
\),
and therefore the nominal fixed-time dissipation satisfies
\(
\nabla V^\top
[
\hat{\bfv}+\bg\bu^*
]
\leq
-aV^\mu
-bV^\nu
\).
Since
\(
\bfv
=
\hat{\bfv}
+
(\bfv-\hat{\bfv})
\),
one has
\(
\nabla V^\top
[
\bfv+\bg\bu^*
]
=
\nabla V^\top
[
\hat{\bfv}+\bg\bu^*
]
+
\nabla V^\top
[
\bfv-\hat{\bfv}
]
\).
Hence,
\(
\nabla V^\top
[
\bfv+\bg\bu^*
]
\leq
-aV^\mu
-bV^\nu
+
\nabla V^\top
[
\bfv-\hat{\bfv}
]
\).

By Lemma~\ref{lem:identification_error},
\(
\|
\bfv(\bx)-\hat{\bfv}(\bx)
\|
\leq
\Delta_f
\)
for every
\(
\bx\in\Omega
\).
Since
\(
\Omega
\)
is compact and
\(
\nabla V(\bx)
\)
is continuous, there exists a finite constant
\(
\bar V_g>0
\)
such that
\(
\|\nabla V(\bx)\|
\leq
\bar V_g
\)
for every
\(
\bx\in\Omega
\).
Therefore,
\(
\nabla V^\top
[
\bfv-\hat{\bfv}
]
\leq
\left|
\nabla V^\top
[
\bfv-\hat{\bfv}
]
\right|
\leq
\|\nabla V\|
\|
\bfv-\hat{\bfv}
\|
\leq
\bar V_g\Delta_f
\).
Define
\(
\Delta_{fV}
:=
\bar V_g\Delta_f
\).
Then
\(
\nabla V^\top
[
\bfv+\bg\bu^*
]
\leq
-aV^\mu
-bV^\nu
+
\Delta_{fV}
\).

The inverse-optimal fixed-time construction therefore gives the nominal
dissipation
\(
\nabla V^\top
[
\bfv+\bg\bu^*
]
\leq
-aV^\mu
-bV^\nu
+
\Delta_V
\),
where
\(
\Delta_V\geq0
\)
collects the bounded residual produced by the difference between the
nominal inverse-optimal model and the actual closed-loop model,
together with the remaining value-function approximation terms. In
particular, the model-reconstruction contribution contained in
\(
\Delta_V
\)
is explicitly given by
\(
\Delta_{fV}
=
\bar V_g\Delta_f
\).
Hence,
\(
\dot V
\leq
-aV^\mu
-bV^\nu
+
\Delta_V
+
\nabla V^\top\bg
(\hat{\bu}-\bu^*)
\).

By the Cauchy--Schwarz inequality,
\(
\nabla V^\top\bg
(\hat{\bu}-\bu^*)
\leq
\|
\bg^\top\nabla V
\|
\|
\hat{\bu}-\bu^*
\|
\).
Moreover,
\(
\|
\hat{\bu}-\bu^*
\|
\leq
\frac{1}{2}
\|
\bR^{-1}
\bg^\top
(\nabla\bphi)^\top
\|
\|\tilde{\bm W}\|
+
\frac{1}{2}
\|
\bR^{-1}
\bg^\top
\nabla\varepsilon
\|
\).
Since
\(
\Omega
\)
is compact and the involved functions are continuous, define finite
constants
\(
c_V>0
\),
\(
c_\phi>0
\),
and
\(
c_\varepsilon\geq0
\)
such that
\(
\|
\bg^\top\nabla V
\|
\leq
c_V
\),
\(
\|
\bR^{-1}
\bg^\top
(\nabla\bphi)^\top
\|
\leq
c_\phi
\),
and
\(
\|
\bR^{-1}
\bg^\top
\nabla\varepsilon
\|
\leq
c_\varepsilon
\).
Then
\(
\nabla V^\top\bg
(\hat{\bu}-\bu^*)
\leq
\frac{1}{2}c_Vc_\phi
\|\tilde{\bm W}\|
+
\frac{1}{2}c_Vc_\varepsilon
\).

Since
\(
V_W
=
\frac{1}{2}
\tilde{\bm W}^{\top}
\bm\Gamma^{-1}
\tilde{\bm W}
\),
the eigenvalue inequality gives
\(
\|\tilde{\bm W}\|^2
\leq
2\lambda_{\max}(\bm\Gamma)V_W
\),
and hence
\(
\|\tilde{\bm W}\|
\leq
[
2\lambda_{\max}(\bm\Gamma)
]^{1/2}
V_W^{1/2}
\).
Define
\(
c_W
:=
\frac{1}{2}
c_Vc_\phi
[
2\lambda_{\max}(\bm\Gamma)
]^{1/2}
\).
Therefore,
\(
\nabla V^\top\bg
(\hat{\bu}-\bu^*)
\leq
c_WV_W^{1/2}
+
\frac{1}{2}c_Vc_\varepsilon
\).

For every
\(
s\geq0
\)
and
\(
0<\mu<1<\nu
\),
one has
\(
s^{1/2}
\leq
1+s^\mu+s^\nu
\).
Indeed, for
\(
0\leq s\leq1
\),
one has
\(
s^{1/2}\leq1
\),
while for
\(
s\geq1
\),
one has
\(
s^{1/2}\leq s^\nu
\).
Hence,
\(
c_WV_W^{1/2}
\leq
c_WV_W^\mu
+
c_WV_W^\nu
+
c_W
\).

Define
\(
\varrho_1:=c_W
\)
and
\(
\varrho_2:=c_W
\).
These are precisely the state--critic coupling coefficients appearing
in the theorem. Further define
\(
\Delta_u
:=
c_W
+
\frac{1}{2}c_Vc_\varepsilon
+
\Delta_V
\).
It follows that
\(
\dot V
\leq
-aV^\mu
-bV^\nu
+
\varrho_1V_W^\mu
+
\varrho_2V_W^\nu
+
\Delta_u
\).


\begin{lemma}[Derivative-limited low-order window bound~\cite{VuDoan2026PredefinedResilientIRL},~lemma~7]
\label{lem:derivative_limited_window_bound}
Let
\(
y:[t-T,t]\rightarrow\mathbb R_{\geq0}
\)
be absolutely continuous and satisfy
\(
|\dot y(\tau)|\leq L_y
\)
and
\(
\int_{t-T}^{t}y(\tau)d\tau\leq\bar I
\),
where
\(
L_y,\bar I>0
\).
For
\(
0<\gamma_1<1
\),
define
\(
\mu:=(\gamma_1+1)/2
\).
Then
\begin{equation}
    \int_{t-T}^{t}
    y^{\gamma_1}(\tau)d\tau
    \geq
    \rho_{\gamma_1,T}
    \left(
        \int_{t-T}^{t}
        y(\tau)d\tau
    \right)^\mu,
    \label{eq:derivative_limited_window_bound}
\end{equation}
where
\begin{equation}
    \rho_{\gamma_1,T}
    :=
    \min
    \left\{
        2^{\gamma_1-1}
        L_y^{-\frac{1-\gamma_1}{2}},
        \;
        4^{\gamma_1-1}
        T^{1-\gamma_1}
        \bar I^{\frac{\gamma_1-1}{2}}
    \right\}.
    \label{eq:rho_derivative_limited_window}
\end{equation}
\end{lemma}

Substitution into
\(
\dot{\bar V}_c
=
\dot V(t)
+
\eta
\bigl[
V(t)-V(t-T)
\bigr]
\)
and the fundamental theorem of calculus,
\(
V(t)-V(t-T)
=
\int_{t-T}^{t}\dot V(\tau)\,d\tau
\),
give
\(
\dot{\bar V}_c
=
\dot V(t)
+
\eta
\int_{t-T}^{t}
\dot V(\tau)\,d\tau
\).
Using the preceding pointwise estimate
\(
\dot V
\leq
-aV^{\gamma_1}
-bV^{\gamma_2}
+
\varrho_1V_W^\mu
+
\varrho_2V_W^\nu
+
\Delta_u
\),
one obtains
\(
\dot{\bar V}_c
\leq
-aV^{\gamma_1}(t)
-bV^{\gamma_2}(t)
-\eta a
\int_{t-T}^{t}
V^{\gamma_1}(\tau)\,d\tau
-\eta b
\int_{t-T}^{t}
V^{\gamma_2}(\tau)\,d\tau
+
\varrho_1V_W^\mu(t)
+
\varrho_2V_W^\nu(t)
+
\eta\varrho_1
\int_{t-T}^{t}
V_W^\mu(\tau)\,d\tau
+
\eta\varrho_2
\int_{t-T}^{t}
V_W^\nu(\tau)\,d\tau
+
(1+\eta T)\Delta_u
\).

Let
\(
I_V(t)
:=
\int_{t-T}^{t}
V(\tau)\,d\tau
\),
so that
\(
\bar V
=
V+\eta I_V
\).
By Lemma~\ref{lem:derivative_limited_window_bound}, with
\(
\mu=(\gamma_1+1)/2
\),
\(
\int_{t-T}^{t}
V^{\gamma_1}(\tau)\,d\tau
\geq
\rho_{\gamma_1,T}
I_V^\mu
\).
Hence,
\(
V^{\gamma_1}
+
\eta\rho_{\gamma_1,T}I_V^\mu
\)
provides the low-order finite-window dissipation. Since
\(
\gamma_2>1
\),
Jensen's inequality gives
\(
\int_{t-T}^{t}
V^{\gamma_2}(\tau)\,d\tau
\geq
T^{1-\gamma_2}
I_V^{\gamma_2}
\).
Let
\(
\nu=(\gamma_2+1)/2
\),
so that
\(
1<\nu<\gamma_2
\).
Using
\(
s^{\gamma_2}
\geq
s^\nu-1
\)
for
\(
s\geq0
\),
one obtains
\(
T^{1-\gamma_2}I_V^{\gamma_2}
\geq
T^{1-\gamma_2}I_V^\nu
-
T^{1-\gamma_2}
\).

Consequently, let
\(
V_{\Omega}
:=
\max_{\bx\in\Omega}V(\bx)
\).
Since
\(
0<\gamma_1<\mu<1
\)
and
\(
0\leq V(\bx)\leq V_{\Omega}
\),
one has
\(
V^{\gamma_1}
\geq
V_{\Omega}^{\gamma_1-\mu}V^\mu
\).
Moreover, by Lemma~\ref{lem:derivative_limited_window_bound},
\(
\int_{t-T}^{t}V^{\gamma_1}(\tau)\,d\tau
\geq
\rho_{\gamma_1,T}
I_V^\mu
\),
where
\(
I_V(t)
:=
\int_{t-T}^{t}V(\tau)\,d\tau
\).
Hence,
\(
aV^{\gamma_1}
+
\eta a
\int_{t-T}^{t}V^{\gamma_1}(\tau)\,d\tau
\geq
aV_{\Omega}^{\gamma_1-\mu}V^\mu
+
a\rho_{\gamma_1,T}\eta^{1-\mu}
(\eta I_V)^\mu
\).
Define
\(
A_V
:=
a
\min
\{
V_{\Omega}^{\gamma_1-\mu},
\rho_{\gamma_1,T}\eta^{1-\mu}
\}
>0
\).
Since
\(
0<\mu<1
\),
\(
V^\mu
+
(\eta I_V)^\mu
\geq
(V+\eta I_V)^\mu
=
\bar V^\mu
\),
and therefore
\(
aV^{\gamma_1}
+
\eta a
\int_{t-T}^{t}V^{\gamma_1}(\tau)\,d\tau
\geq
A_V\bar V^\mu
\).

For the high-order term, since
\(
1<\nu<\gamma_2
\),
\(
s^{\gamma_2}
\geq
s^\nu-1
\)
for every
\(
s\geq0
\).
Moreover, Jensen's inequality gives
\(
\int_{t-T}^{t}V^{\gamma_2}(\tau)\,d\tau
\geq
T^{1-\gamma_2}I_V^{\gamma_2}
\geq
T^{1-\gamma_2}
(I_V^\nu-1)
\).
Thus,
\(
bV^{\gamma_2}
+
\eta b
\int_{t-T}^{t}V^{\gamma_2}(\tau)\,d\tau
\geq
bV^\nu
+
bT^{1-\gamma_2}\eta^{1-\nu}
(\eta I_V)^\nu
-
b
-
\eta bT^{1-\gamma_2}
\).
Define
\(
B_V
:=
2^{1-\nu}b
\min
\{
1,
T^{1-\gamma_2}\eta^{1-\nu}
\}
>0
\).
Since
\(
\nu>1
\),
\(
V^\nu
+
(\eta I_V)^\nu
\geq
2^{1-\nu}
(V+\eta I_V)^\nu
=
2^{1-\nu}\bar V^\nu
\),
and therefore
\(
bV^{\gamma_2}
+
\eta b
\int_{t-T}^{t}V^{\gamma_2}(\tau)\,d\tau
\geq
B_V\bar V^\nu
-
b
-
\eta bT^{1-\gamma_2}
\).

Since the critic-weight error is bounded on the compact analysis domain,
define
\(
\overline{\mathcal I}_{W,\mu}
:=
\sup_{t\geq0}
\int_{t-T}^{t}
V_W^\mu(\tau)\,d\tau
<\infty
\)
and
\(
\overline{\mathcal I}_{W,\nu}
:=
\sup_{t\geq0}
\int_{t-T}^{t}
V_W^\nu(\tau)\,d\tau
<\infty
\).
Define
\(
\Delta_{\bar V}
:=
(1+\eta T)\Delta_u
+
b
+
\eta bT^{1-\gamma_2}
+
\eta\varrho_1
\overline{\mathcal I}_{W,\mu}
+
\eta\varrho_2
\overline{\mathcal I}_{W,\nu}
\).
Hence,
\(
\dot{\bar V}_c
\leq
-A_V\bar V^\mu
-B_V\bar V^\nu
+
\varrho_1V_W^\mu
+
\varrho_2V_W^\nu
+
\Delta_{\bar V}
\).

Under the event-triggered implementation,
\(
\bu(t)
=
\hat{\bu}(\bx(t),\hat{\bm W}(t))
+
\bm e_u(t)
\),
and hence
\(
\dot{\bx}
=
\bfv
+
\bg\hat{\bu}
+
\bg\bm e_u
\).
Therefore,
\(
\dot V(t)
=
\dot V_c(t)
+
\nabla V^\top(t)\bg(t)\bm e_u(t)
\).
Using
\(
\dot{\bar V}
=
\dot V(t)
+
\eta
\int_{t-T}^{t}
\dot V(\tau)\,d\tau
\),
one obtains
\(
\dot{\bar V}
=
\dot{\bar V}_c
+
\nabla V^\top(t)\bg(t)\bm e_u(t)
+
\eta
\int_{t-T}^{t}
\nabla V^\top(\tau)\bg(\tau)\bm e_u(\tau)\,d\tau
\).

Since
\(
\|
\bg^\top(\bx)\nabla V(\bx)
\|
\leq
\bar G
\),
the instantaneous event term satisfies
\(
\nabla V^\top(t)\bg(t)\bm e_u(t)
\leq
\bar G\|\bm e_u(t)\|
\).
For
\(
t\in[t_k,t_{k+1})
\),
the moving-window event contribution can be separated into the part
generated before the most recent event and the part generated after
\(t_k\).
Denote the bounded former contribution by
\(
\Delta_{\mathrm{hist}}\geq0
\).
Then
\(
\eta
\int_{t-T}^{t}
\nabla V^\top(\tau)\bg(\tau)\bm e_u(\tau)\,d\tau
\leq
\Delta_{\mathrm{hist}}
+
\eta\bar G
\int_{t_k}^{t}
\|\bm e_u(\tau)\|\,d\tau
\). where
\(
s_k(t):=\max\{t-T,t_k\}
\)
and
\(
\Delta_{\rm hist}
:=
\eta\bar G
\sup_{\substack{k\ge0\\ t\in[t_k,t_{k+1})}}
\int_{t-T}^{s_k(t)}
\|\bm e_u(\tau)\|\,d\tau
<\infty
\)

By the Cauchy--Schwarz inequality,
\(
\int_{t_k}^{t}
\|\bm e_u(\tau)\|\,d\tau
\leq
\sqrt{
\int_{t_k}^{t}1^2\,d\tau
}
\sqrt{
\int_{t_k}^{t}
\|\bm e_u(\tau)\|^2\,d\tau
}
\).
Since
\(
\int_{t_k}^{t}1\,d\tau
=
t-t_k
\),
one obtains
\(
\int_{t_k}^{t}
\|\bm e_u(\tau)\|\,d\tau
\leq
\sqrt{
(t-t_k)
\int_{t_k}^{t}
\|\bm e_u(\tau)\|^2\,d\tau
}
\).
Therefore,
\(
\dot{\bar V}
\leq
-A_V\bar V^\mu
-B_V\bar V^\nu
+
\varrho_1V_W^\mu
+
\varrho_2V_W^\nu
+
\Delta_{\bar V}
+
\Delta_{\mathrm{hist}}
+
\Xi_u(t)
\),
where
\(
\Xi_u(t)
=
\bar G\|\bm e_u(t)\|
+
\eta\bar G
\sqrt{
(t-t_k)
\int_{t_k}^{t}
\|\bm e_u(\tau)\|^2\,d\tau
}
\)
is the composite event-error measure defined in the triggering
mechanism.

Since
\(
\mathcal J
=
\bar V+\lambda_WV_W
\),
one has
\(
\dot{\mathcal J}
=
\dot{\bar V}
+
\lambda_W\dot V_W
\).
From the practical fixed-time critic analysis,
\(
\dot V_W
\leq
-C_{W1}V_W^\mu
-C_{W2}V_W^\nu
+
\Delta_W
\),
where
\(
C_{W1}>0
\)
and
\(
C_{W2}>0
\)
are the critic decay coefficients obtained in the critic convergence
analysis, and
\(
\Delta_W\geq0
\)
is the corresponding bounded critic residual.
Therefore,
\(
\dot{\mathcal J}
\leq
-A_V\bar V^\mu
-B_V\bar V^\nu
+
\varrho_1V_W^\mu
+
\varrho_2V_W^\nu
-
\lambda_WC_{W1}V_W^\mu
-
\lambda_WC_{W2}V_W^\nu
+
\Delta_{\bar V}
+
\lambda_W\Delta_W
+
\Delta_{\mathrm{hist}}
+
\Xi_u(t)
\).

Collecting equal powers of
\(
V_W
\)
gives
\(
\dot{\mathcal J}
\leq
-A_V\bar V^\mu
-B_V\bar V^\nu
-
(\lambda_WC_{W1}-\varrho_1)V_W^\mu
-
(\lambda_WC_{W2}-\varrho_2)V_W^\nu
+
\Delta_{\bar V}
+
\lambda_W\Delta_W
+
\Delta_{\mathrm{hist}}
+
\Xi_u(t)
\).

By the choice
\(
\lambda_W
>
\max
\{
\varrho_1/C_{W1},
\varrho_2/C_{W2}
\}
\),
define
\(
d_1
:=
\lambda_WC_{W1}-\varrho_1
>0
\)
and
\(
d_2
:=
\lambda_WC_{W2}-\varrho_2
>0
\).
Also define the total continuous-learning residual
\(
\Delta_J
:=
\Delta_{\bar V}
+
\lambda_W\Delta_W
\).
Thus,
\(
\dot{\mathcal J}
\leq
-A_V\bar V^\mu
-B_V\bar V^\nu
-d_1V_W^\mu
-d_2V_W^\nu
+
\Delta_J
+
\Delta_{\mathrm{hist}}
+
\Xi_u(t)
\).

Since
\(
d_1V_W^\mu
=
(d_1/\lambda_W^\mu)
(\lambda_WV_W)^\mu
\),
define
\(
c_1
:=
\min
\{
A_V,
d_1/\lambda_W^\mu
\}
>0
\).
Then
\(
-A_V\bar V^\mu
-d_1V_W^\mu
\leq
-c_1
[
\bar V^\mu
+
(\lambda_WV_W)^\mu
]
\).
For
\(
0<\mu<1
\),
\(
r^\mu+s^\mu
\geq
(r+s)^\mu
\)
for all
\(
r,s\geq0
\).
Therefore,
\(
\bar V^\mu
+
(\lambda_WV_W)^\mu
\geq
[
\bar V+\lambda_WV_W
]^\mu
=
\mathcal J^\mu
\).
Defining
\(
C_1:=c_1>0
\),
one obtains
\(
-A_V\bar V^\mu
-d_1V_W^\mu
\leq
-C_1\mathcal J^\mu
\).

Similarly,
\(
d_2V_W^\nu
=
(d_2/\lambda_W^\nu)
(\lambda_WV_W)^\nu
\).
Define
\(
c_2
:=
\min
\{
B_V,
d_2/\lambda_W^\nu
\}
>0
\).
Then
\(
-B_V\bar V^\nu
-d_2V_W^\nu
\leq
-c_2
[
\bar V^\nu
+
(\lambda_WV_W)^\nu
]
\).
For
\(
\nu>1
\),
\(
r^\nu+s^\nu
\geq
2^{1-\nu}(r+s)^\nu
\).
Hence,
\(
\bar V^\nu
+
(\lambda_WV_W)^\nu
\geq
2^{1-\nu}
[
\bar V+\lambda_WV_W
]^\nu
=
2^{1-\nu}\mathcal J^\nu
\).
Defining
\(
C_2
:=
2^{1-\nu}c_2
>0
\),
one obtains
\(
-B_V\bar V^\nu
-d_2V_W^\nu
\leq
-C_2\mathcal J^\nu
\).

Combining the preceding inequalities gives
\(
\dot{\mathcal J}
\leq
-C_1\mathcal J^\mu
-C_2\mathcal J^\nu
+
\Delta_J
+
\Delta_{\mathrm{hist}}
+
\Xi_u(t)
\).

For every
\(
t\in[t_k,t_{k+1})
\),
the triggering boundary in
\eqref{eq:composite_triggering_rule}
has not yet been reached, and therefore
\(
\Xi_u(t)
<
\sigma
[
C_1\mathcal J_x^\mu(t)
+
C_2\mathcal J_x^\nu(t)
]
+
\delta_{\mathrm{ET}}
\).
Since
\(
0\leq
\mathcal J_x(t)
\leq
\mathcal J(t)
\),
one has
\(
\mathcal J_x^\mu(t)
\leq
\mathcal J^\mu(t)
\)
and
\(
\mathcal J_x^\nu(t)
\leq
\mathcal J^\nu(t)
\).
Hence,
\(
\Xi_u(t)
<
\sigma
[
C_1\mathcal J^\mu(t)
+
C_2\mathcal J^\nu(t)
]
+
\delta_{\mathrm{ET}}
\).
Substitution gives
\(
\dot{\mathcal J}
\leq
-C_1\mathcal J^\mu
-C_2\mathcal J^\nu
+
\Delta_J
+
\Delta_{\mathrm{hist}}
+
\sigma
[
C_1\mathcal J^\mu
+
C_2\mathcal J^\nu
]
+
\delta_{\mathrm{ET}}
\).
Collecting equal powers yields
\(
\dot{\mathcal J}
\leq
-(1-\sigma)C_1\mathcal J^\mu
-(1-\sigma)C_2\mathcal J^\nu
+
\Delta_J
+
\Delta_{\mathrm{hist}}
+
\delta_{\mathrm{ET}}
\).
With
\(
\Delta_T
:=
\Delta_J
+
\Delta_{\mathrm{hist}}
+
\delta_{\mathrm{ET}}
\),
one obtains
\(
\dot{\mathcal J}
\leq
-(1-\sigma)C_1\mathcal J^\mu
-(1-\sigma)C_2\mathcal J^\nu
+
\Delta_T
\),
which proves
\eqref{eq:composite_practical_ft_bound}.

To characterize the practical residual set, choose
\(
0<\chi<1
\)
and define
\(
\mathcal J_\Delta>0
\)
as the unique positive solution of
\(
(1-\sigma)
[
C_1\mathcal J_\Delta^\mu
+
C_2\mathcal J_\Delta^\nu
]
=
\Delta_T/\chi
\).
Since
\(
C_1s^\mu+C_2s^\nu
\)
is continuous and strictly increasing for
\(
s>0
\),
this solution is unique. For every
\(
\mathcal J\geq\mathcal J_\Delta
\),
one has
\(
(1-\sigma)
[
C_1\mathcal J^\mu
+
C_2\mathcal J^\nu
]
\geq
\Delta_T/\chi
\),
and hence
\(
\Delta_T
\leq
\chi(1-\sigma)
[
C_1\mathcal J^\mu
+
C_2\mathcal J^\nu
]
\).
Substitution gives
\(
\dot{\mathcal J}
\leq
-(1-\chi)(1-\sigma)
C_1\mathcal J^\mu
-
(1-\chi)(1-\sigma)
C_2\mathcal J^\nu
\)
for all
\(
\mathcal J\geq\mathcal J_\Delta
\).

Define
\(
\alpha_T
:=
(1-\chi)(1-\sigma)>0
\).
Outside the residual set,
\(
\dot{\mathcal J}
\leq
-\alpha_TC_1\mathcal J^\mu
-
\alpha_TC_2\mathcal J^\nu
\).
Let
\(
\mathcal J_0:=\mathcal J(0)
\).
For
\(
\mathcal J>\mathcal J_\Delta
\),
one has
\(
dt
\leq
-d\mathcal J/
[
\alpha_TC_1\mathcal J^\mu
+
\alpha_TC_2\mathcal J^\nu
]
\),
and therefore
\(
T_\Delta
\leq
\int_{\mathcal J_\Delta}^{\mathcal J_0}
ds/
[
\alpha_TC_1s^\mu
+
\alpha_TC_2s^\nu
]
\leq
\int_0^{\mathcal J_0}
ds/
[
\alpha_TC_1s^\mu
+
\alpha_TC_2s^\nu
]
\).

For
\(
0<s\leq1
\),
\(
\alpha_TC_1s^\mu
+
\alpha_TC_2s^\nu
\geq
\alpha_TC_1s^\mu
\),
so
\(
\int_0^1
ds/
[
\alpha_TC_1s^\mu
+
\alpha_TC_2s^\nu
]
\leq
1/
[
\alpha_TC_1(1-\mu)
]
\).
For
\(
s\geq1
\),
\(
\alpha_TC_1s^\mu
+
\alpha_TC_2s^\nu
\geq
\alpha_TC_2s^\nu
\),
and hence
\(
\int_1^\infty
ds/
[
\alpha_TC_1s^\mu
+
\alpha_TC_2s^\nu
]
\leq
1/
[
\alpha_TC_2(\nu-1)
]
\).
Therefore,
\(
T_\Delta
\leq
\frac{1}
{
(1-\chi)(1-\sigma)
C_1(1-\mu)
}
+
\frac{1}
{
(1-\chi)(1-\sigma)
C_2(\nu-1)
}
\).
The bound is independent of the initial condition. If
\(
\Delta_T=0
\),
then
\(
\mathcal J_\Delta=0
\)
and the same argument gives exact fixed-time convergence.
\end{proof}
\begin{lemma}[Exclusion of Zeno behavior]
\label{lem:non_zeno_et}

Suppose that
\(
\hat{\bu}(\bx,\hat{\bm W})
\)
is continuously differentiable on the considered compact set and that
there exist finite constants
\(
\bar F>0
\)
and
\(
\bar W>0
\)
such that
\(
\|\dot{\bx}\|\leq\bar F
\)
and
\(
\|\dot{\hat{\bm W}}\|\leq\bar W
\).
Then every nonterminal inter-event interval generated by
\eqref{eq:composite_triggering_rule} satisfies a strictly positive
uniform lower bound.
\end{lemma}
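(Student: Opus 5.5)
The plan is to bound the growth rate of the event-error measure \(\Xi_u\) on \([t_k,t_{k+1})\) and compare it with the triggering threshold. The threshold is at least \(\delta_{\mathrm{ET}}>0\), because \(\mathcal J_x\geq0\) and \(C_1,C_2>0\). Since \(\Xi_u(t_k)=0\), the next event can only occur once \(\Xi_u\) has climbed from \(0\) to at least \(\delta_{\mathrm{ET}}\). A uniform growth bound on \(\Xi_u\) therefore turns directly into a uniform lower bound on \(t_{k+1}-t_k\). The point is that we never need to bound the state-dependent part \(\sigma[C_1\mathcal J_x^\mu+C_2\mathcal J_x^\nu]\) from below; the fixed margin \(\delta_{\mathrm{ET}}\) alone suffices. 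This sidesteps the usual difficulty that the threshold vanishes near the origin.

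First, bound \(\|\bm e_u(t)\|\) on \([t_k,t_{k+1})\). The first term of \(\bm e_u\) is held constant, so \(\dot{\bm e}_u=-\frac{d}{dt}\hat{\bu}(\bx(t),\hat{\bm W}(t))\). Because \(\hat{\bu}\) is \(C^1\) on the compact set, its partial Jacobians are bounded there, say \(\|\partial\hat{\bu}/\partial\bx\|\leq L_x\) and \(\|\partial\hat{\bu}/\partial\hat{\bm W}\|\leq L_W\). Combining these with \(\|\dot{\bx}\|\leq\bar F\) and \(\|\dot{\hat{\bm W}}\|\leq\bar W\), the chain rule gives \(\|\dot{\bm e}_u\|\leq L_e:=L_x\bar F+L_W\bar W\). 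Since \(\bm e_u(t_k)=\bm0\), this yields \(\|\bm e_u(t)\|\leq L_e(t-t_k)\).

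Next, substitute this into the integral part of \(\Xi_u\). We get \(\int_{t_k}^{t}\|\bm e_u\|^2d\tau\leq L_e^2(t-t_k)^3/3\). Hence
\(\Xi_u(t)\leq \bar G L_e(t-t_k)+\eta\bar G L_e(t-t_k)^2/\sqrt3=:h(t-t_k)\).
The function \(h\) is continuous, strictly increasing, and satisfies \(h(0)=0\).

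Now compare with the threshold. At the event, continuity of \(\Xi_u\) on the interval gives \(\Xi_u(t_{k+1}^-)\geq\delta_{\mathrm{ET}}\), so \(h(t_{k+1}-t_k)\geq\delta_{\mathrm{ET}}\). Solving the quadratic \(h(s)=\delta_{\mathrm{ET}}\) gives the explicit bound
\[
t_{k+1}-t_k\geq \tau_{\min}:=\frac{-\bar G L_e+\sqrt{\bar G^2L_e^2+4\eta\bar G L_e\delta_{\mathrm{ET}}/\sqrt3}}{2\eta\bar G L_e/\sqrt3}>0.
\]
This bound is independent of \(k\) and of the initial condition. If \(L_e=0\), then \(\Xi_u\equiv0\) and no further event occurs, so only the terminal interval remains.

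The main obstacle I expect is justifying the \(C^1\) and boundedness hypotheses along the entire trajectory. This requires the state and critic weights to stay in the compact set on which the constants are taken, and it requires \(\hat{\bu}\) to be differentiable there. The latter is delicate because \(\bm L_1\) contains the quotient \(\sigma_d/D_d\), which may fail to be smooth at the origin. The lemma assumes these properties, so in the proof I will simply take them as given. I will also remark that Theorem~\ref{thm:composite_event_triggered_stability} keeps \(\mathcal J\) bounded, which supports invariance of the compact set. A secondary point is handling the infimum in \eqref{eq:composite_triggering_rule} at \(t_{k+1}\); continuity of \(\Xi_u\) on \([t_k,t_{k+1}]\) resolves it.
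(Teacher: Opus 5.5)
Your proposal is correct and follows essentially the same route as the paper. You bound \(\|\dot{\bm e}_u\|\) by \(L_x\bar F+L_W\bar W\) via the chain rule, integrate from \(\bm e_u(t_k)=\bm 0\) to get a linear-plus-quadratic bound on \(\Xi_u\), use that the threshold is at least \(\delta_{\mathrm{ET}}\), and solve the resulting quadratic, which gives exactly \eqref{eq:explicit_miet_composite}; your extra care with the left limit at \(t_{k+1}\) and with the degenerate case \(L_e=0\) (which can be avoided anyway by enlarging the constants) is a harmless refinement.
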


\begin{proof}

At every event instant,
\(
\bm e_u(t_k)=\bm0
\).
Since
\(
\hat{\bu}(\bx,\hat{\bm W})
\)
is continuously differentiable on the compact operating set, there
exist finite constants
\(
L_x>0
\)
and
\(
L_W>0
\)
such that
\(
\|
\partial\hat{\bu}/\partial\bx
\|
\leq
L_x
\)
and
\(
\|
\partial\hat{\bu}/\partial\hat{\bm W}
\|
\leq
L_W
\).

For
\(
t\in[t_k,t_{k+1})
\),
the held quantity
\(
\hat{\bu}(
\bx(t_k),
\hat{\bm W}(t_k))
\)
is constant. Differentiating
\(
\bm e_u(t)
=
\hat{\bu}(
\bx(t_k),
\hat{\bm W}(t_k))
-
\hat{\bu}(
\bx(t),
\hat{\bm W}(t))
\)
therefore gives
\(
\dot{\bm e}_u
=
-
(\partial\hat{\bu}/\partial\bx)
\dot{\bx}
-
(\partial\hat{\bu}/\partial\hat{\bm W})
\dot{\hat{\bm W}}
\).
Hence,
\(
\|\dot{\bm e}_u\|
\leq
\|
\partial\hat{\bu}/\partial\bx
\|
\|\dot{\bx}\|
+
\|
\partial\hat{\bu}/\partial\hat{\bm W}
\|
\|\dot{\hat{\bm W}}\|
\leq
L_x\bar F
+
L_W\bar W
\).
Define
\(
\bar L_e
:=
L_x\bar F
+
L_W\bar W
>0
\).
Thus,
\(
\|\dot{\bm e}_u(t)\|
\leq
\bar L_e
\).

Since
\(
\bm e_u(t_k)=\bm0
\),
one has
\(
\bm e_u(t)
=
\int_{t_k}^{t}
\dot{\bm e}_u(\tau)\,d\tau
\),
and hence
\(
\|\bm e_u(t)\|
\leq
\int_{t_k}^{t}
\|\dot{\bm e}_u(\tau)\|\,d\tau
\leq
\bar L_e(t-t_k)
\).
Consequently,
\(
\|\bm e_u(\tau)\|
\leq
\bar L_e(\tau-t_k)
\)
for every
\(
\tau\in[t_k,t]
\),
and therefore
\(
\int_{t_k}^{t}
\|\bm e_u(\tau)\|^2\,d\tau
\leq
\bar L_e^2
\int_{t_k}^{t}
(\tau-t_k)^2\,d\tau
=
\frac{\bar L_e^2}{3}
(t-t_k)^3
\).
It follows that
\(
\sqrt{
(t-t_k)
\int_{t_k}^{t}
\|\bm e_u(\tau)\|^2\,d\tau
}
\leq
\frac{\bar L_e}{\sqrt{3}}
(t-t_k)^2
\).

By the definition of the composite event-error measure,
\(
\Xi_u(t)
=
\bar G\|\bm e_u(t)\|
+
\eta\bar G
\sqrt{
(t-t_k)
\int_{t_k}^{t}
\|\bm e_u(\tau)\|^2\,d\tau
}
\),
and hence
\(
\Xi_u(t)
\leq
\bar G\bar L_e(t-t_k)
+
\frac{\eta\bar G\bar L_e}{\sqrt{3}}
(t-t_k)^2
\).

At a nonterminal event instant
\(
t_{k+1}
\),
the triggering boundary is reached:
\(
\Xi_u(t_{k+1})
=
\sigma
[
C_1\mathcal J_x^\mu(t_{k+1})
+
C_2\mathcal J_x^\nu(t_{k+1})
]
+
\delta_{\mathrm{ET}}
\).
Since
\(
\sigma
[
C_1\mathcal J_x^\mu
+
C_2\mathcal J_x^\nu
]
\geq0
\),
it follows that
\(
\Xi_u(t_{k+1})
\geq
\delta_{\mathrm{ET}}
\).

Let
\(
\tau_k
:=
t_{k+1}-t_k
\).
Using the preceding upper bound on
\(
\Xi_u
\),
one obtains
\(
\delta_{\mathrm{ET}}
\leq
\bar G\bar L_e\tau_k
+
\frac{\eta\bar G\bar L_e}{\sqrt{3}}
\tau_k^2
\).
Define
\(
a_e
:=
\bar G\bar L_e>0
\)
and
\(
b_e
:=
\eta\bar G\bar L_e/\sqrt{3}>0
\).
Then
\(
b_e\tau_k^2
+
a_e\tau_k
-
\delta_{\mathrm{ET}}
\geq0
\).
Since
\(
\tau_k>0
\),
the positive root gives
\begin{equation}
    t_{k+1}-t_k
    \geq
    \frac{
        \sqrt{
            a_e^2
            +
            4b_e\delta_{\mathrm{ET}}
        }
        -
        a_e
    }{
        2b_e
    }
    >0.
    \label{eq:explicit_miet_composite}
\end{equation}
Thus, every nonterminal inter-event interval admits a uniform
strictly positive lower bound, so infinitely many events cannot
accumulate in a finite time interval. Hence, Zeno behavior is
excluded.
\end{proof}



\subsection{Simulation Setup}
\label{subsec:simulation_setup}

Consider the nonlinear control-affine plant
\(
\dot{\bx}
=
\mathbf f(\bx)
+
\bg(\bx)\bu
\),
where
\(
\mathbf f(\bx)
=
[f_1(\bx),f_2(\bx),f_3(\bx)]^\top
\)
with
\(
f_1(\bx)
=
-\tanh(x_1+x_2)-0.4x_1^3
\),
\(
f_2(\bx)
=
-\sin(x_2+x_3)-0.3x_1x_2
\),
and
\(
f_3(\bx)
=
-\frac{x_3}{1+x_3^2}
-0.2\tanh(x_1x_3)
\).
Moreover,
\(
\bg(\bx)=\bm I_3
\),
\(
\bx=[x_1,x_2,x_3]^\top\in\mathbb R^3
\),
and
\(
\bu=[u_1,u_2,u_3]^\top\in\mathbb R^3
\).
The identified drift is
\(\hat{\bfv}(\bx)=\hat{\bW}_f^\top\bsigma_f(\bx)\),
where
\(
\hat{\bm W}_f
\)
is obtained from
\eqref{eq:weight_identification},
\(
\bm R=\frac{1}{2}\bm I_3
\).

The desired value function is chosen as
\(
V_d(\bx)
=
\frac{1}{2}\bx^\top\bx
=
\frac{1}{2}
(x_1^2+x_2^2+x_3^2)
\),
and
\(
r^2(\bx)
:=
x_1^2+x_2^2+x_3^2
\).
Using the identified drift, define
\(
\hat F_V(\bx)
:=
\nabla V_d^\top(\bx)
\hat{\mathbf f}(\bx)
\).
To impose the fixed-time decay
\(
\dot V_d
=
-2V_d^{3/4}
-
V_d^{3/2}
\),
define
\(
\hat A(\bx)
:=
\hat F_V(\bx)
+
2V_d^{3/4}(\bx)
+
V_d^{3/2}(\bx)
\).

The inverse-optimal running cost is selected as
\(
L(\bx,\bu)
=
L_1(\bx)\bu
+
L_2(\bx)
+
\bu^\top\bm R\bu
\),
where, for
\(
\bx\neq\bm0
\),
\(
L_1(\bx)
=
\left(
\frac{\hat A(\bx)}
{r^2(\bx)}
-1
\right)
\bx^\top
\)
and
\(
L_2(\bx)
=
-\hat F_V(\bx)
+
\frac{\hat A^2(\bx)}
{2r^2(\bx)}
\).

Accordingly, the associated inverse-optimal feedback policy is
\(
\bu^*(\bx)
=
-\frac{\hat A(\bx)}
{r^2(\bx)}
\bx
\),
for
\(
\bx\neq\bm0
\),
with
\(
\bu^*(\bm0)=\bm0
\).
Under this feedback law,
\(
\dot V_d(\bx)
=
-2V_d^{3/4}(\bx)
-
V_d^{3/2}(\bx)
\),
which establishes the desired fixed-time convergence property.

\subsection{Simulation Results}
\label{subsec:simulation_results}

\begin{figure}[H]
    \centering
    \includegraphics[
        width=0.88\columnwidth,
        height=0.20\textheight
    ]{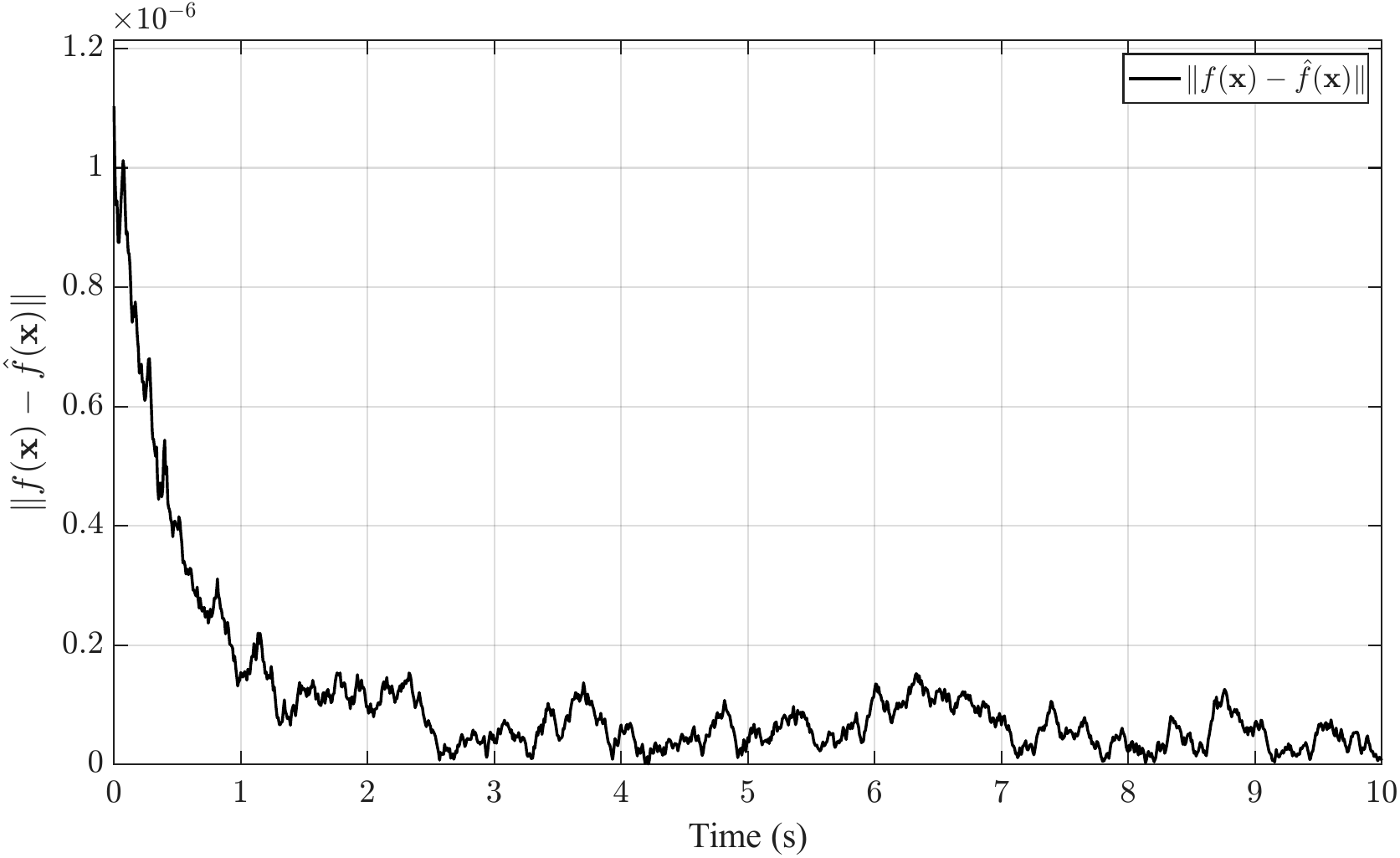}
    \caption{Identification performance.}
    \label{fig:identification_performance}
\end{figure}

\begin{figure}[H]
    \centering
    \includegraphics[
        width=0.88\columnwidth,
        height=0.14\textheight
    ]{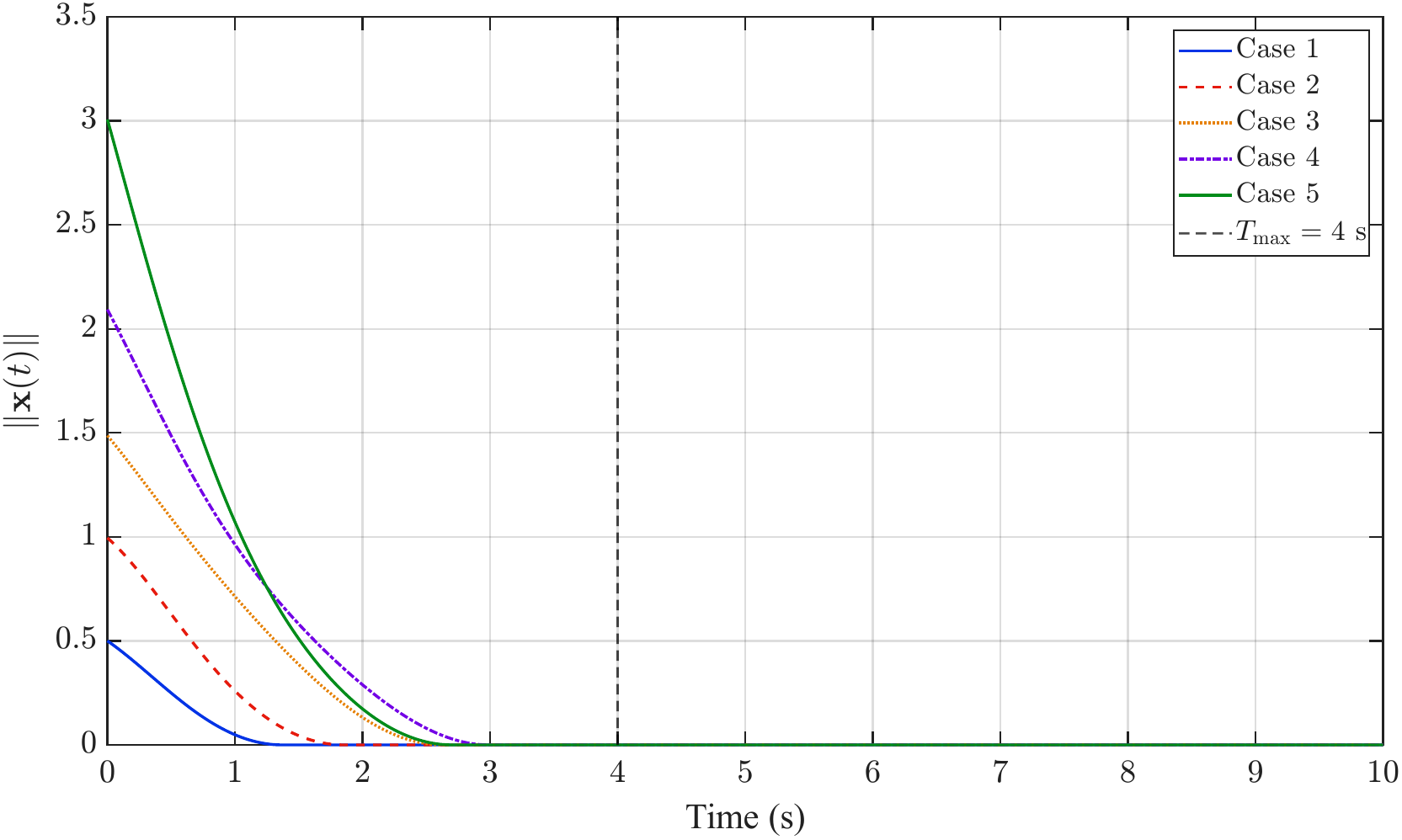}
    \caption{Closed-loop responses under different initial conditions.}
    \label{fig:closed_loop_response}
\end{figure}

\begin{figure}[H]
    \centering
    \includegraphics[
        width=0.88\columnwidth,
        height=0.14\textheight
    ]{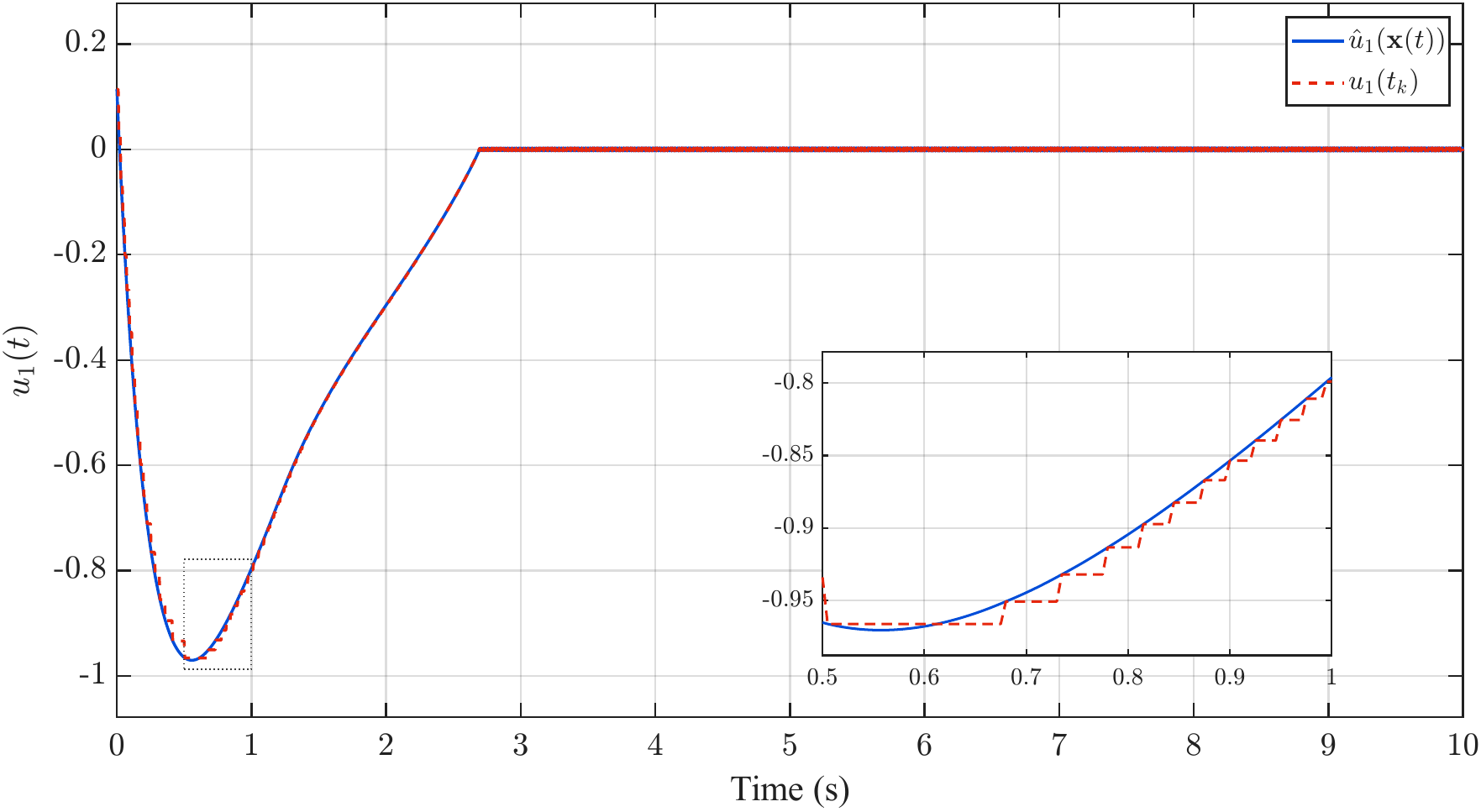}
    \caption{Event-triggered control input and continuously evaluated control policy.}
    \label{fig:event_control_input}
\end{figure}

\begin{figure}[H]
    \centering
    \includegraphics[
        width=0.88\columnwidth,
        height=0.14\textheight]
    {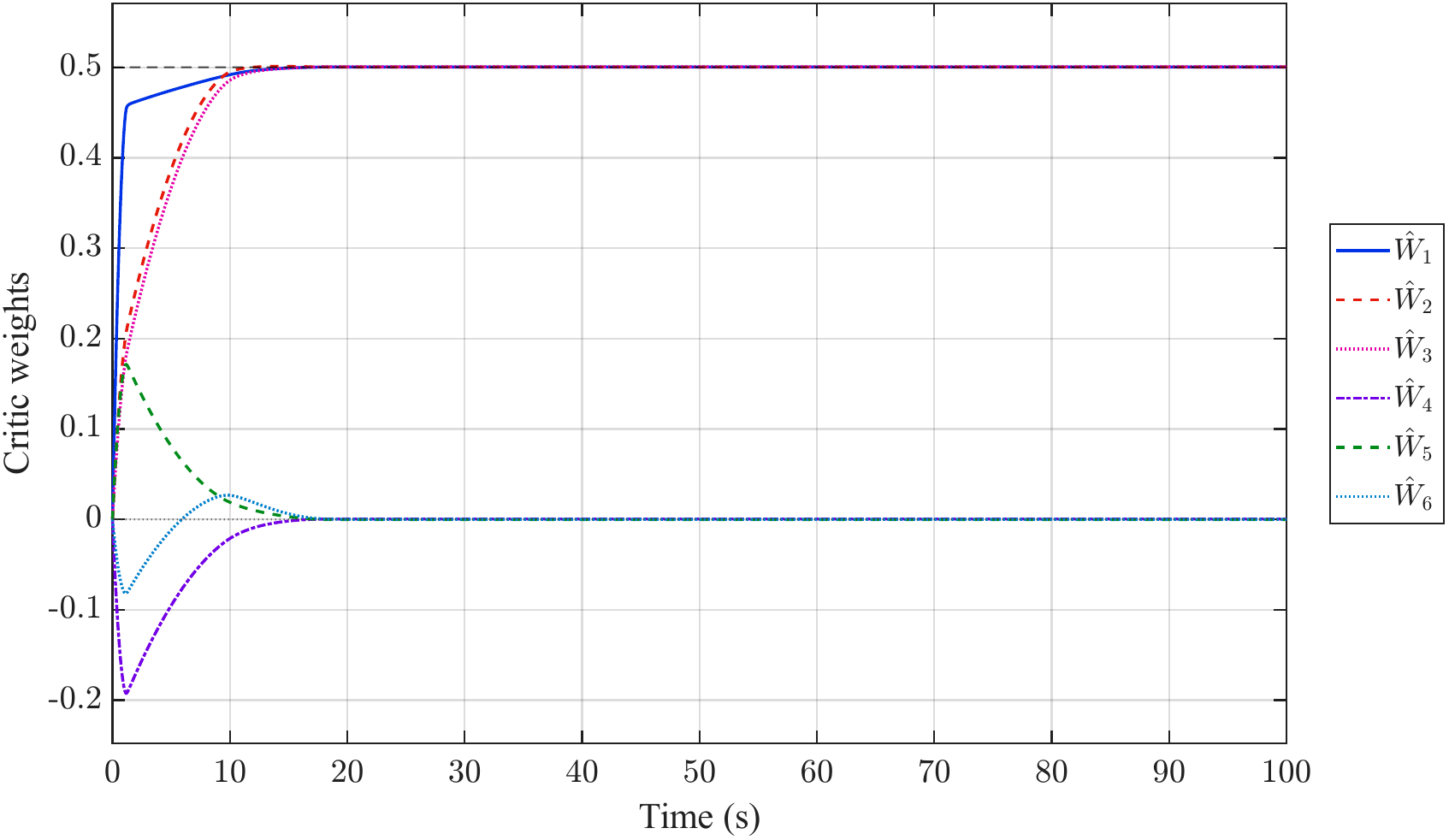}
    \caption{Evolution of the critic weights.}
    \label{fig:critic_weights}
\end{figure}





\begin{figure}[H]
    \centering
    \includegraphics[
        width=0.88\columnwidth,
        height=0.14\textheight
    ]{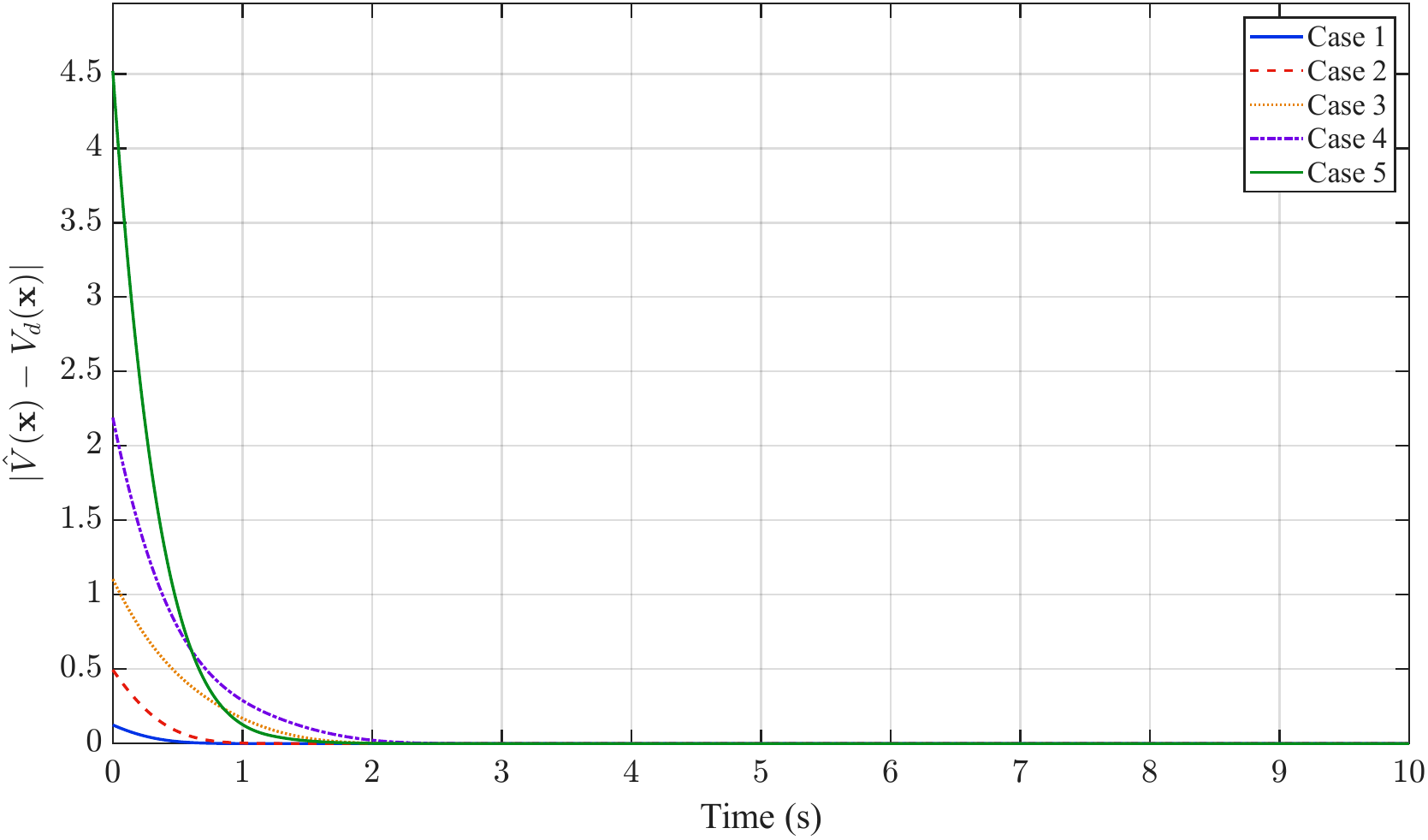}
    \caption{Value-function approximation error under different initial conditions.}
    \label{fig:value_function_error}
\end{figure}

\begin{figure}[H]
    \centering
    \includegraphics[
        width=0.88\columnwidth,
        height=0.14\textheight
    ]{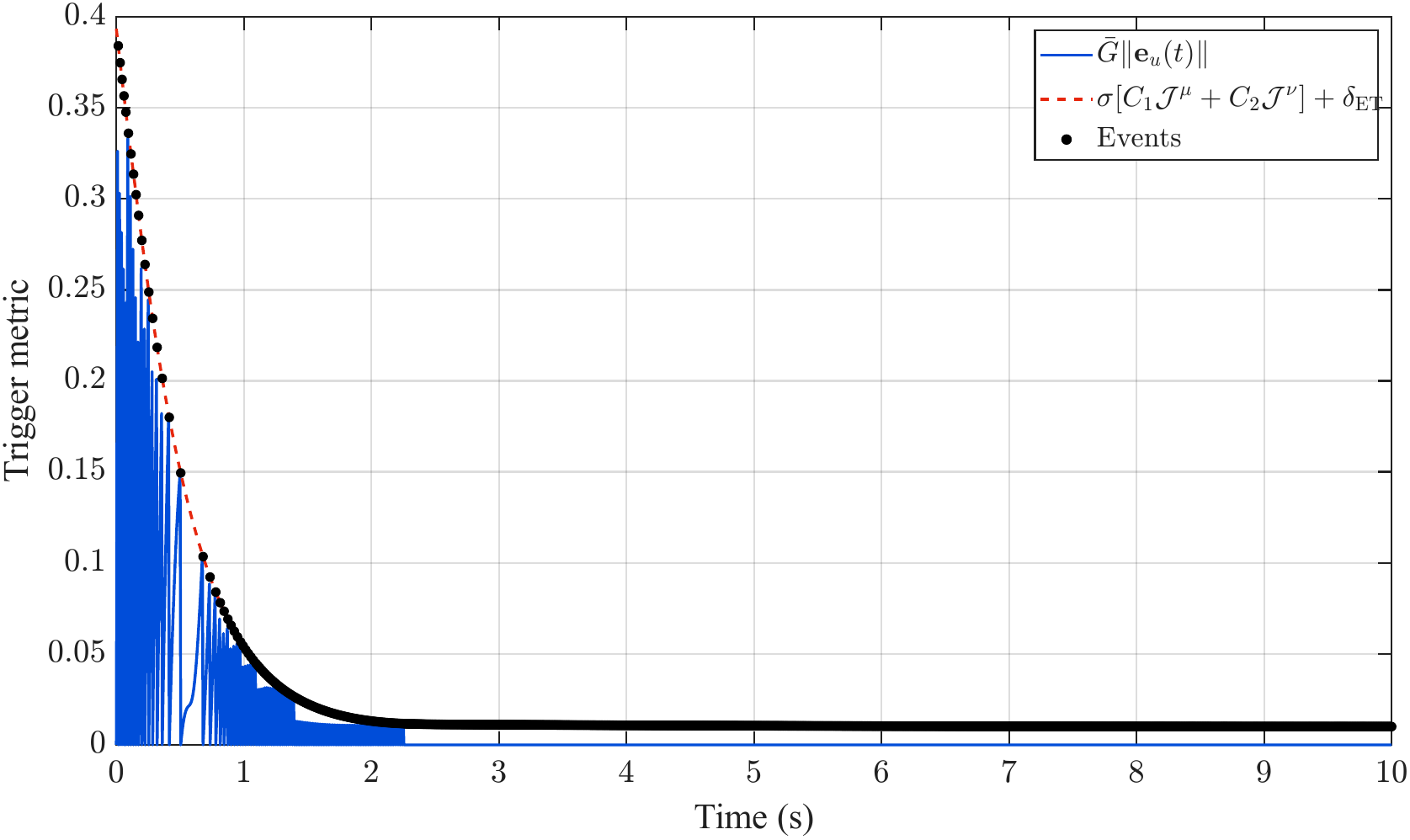}
    \caption{Evolution of the event-triggering condition.}
    \label{fig:trigger_condition}
\end{figure}

\begin{figure}[H]
    \centering
    \includegraphics[
        width=0.88\columnwidth,
        height=0.14\textheight
    ]{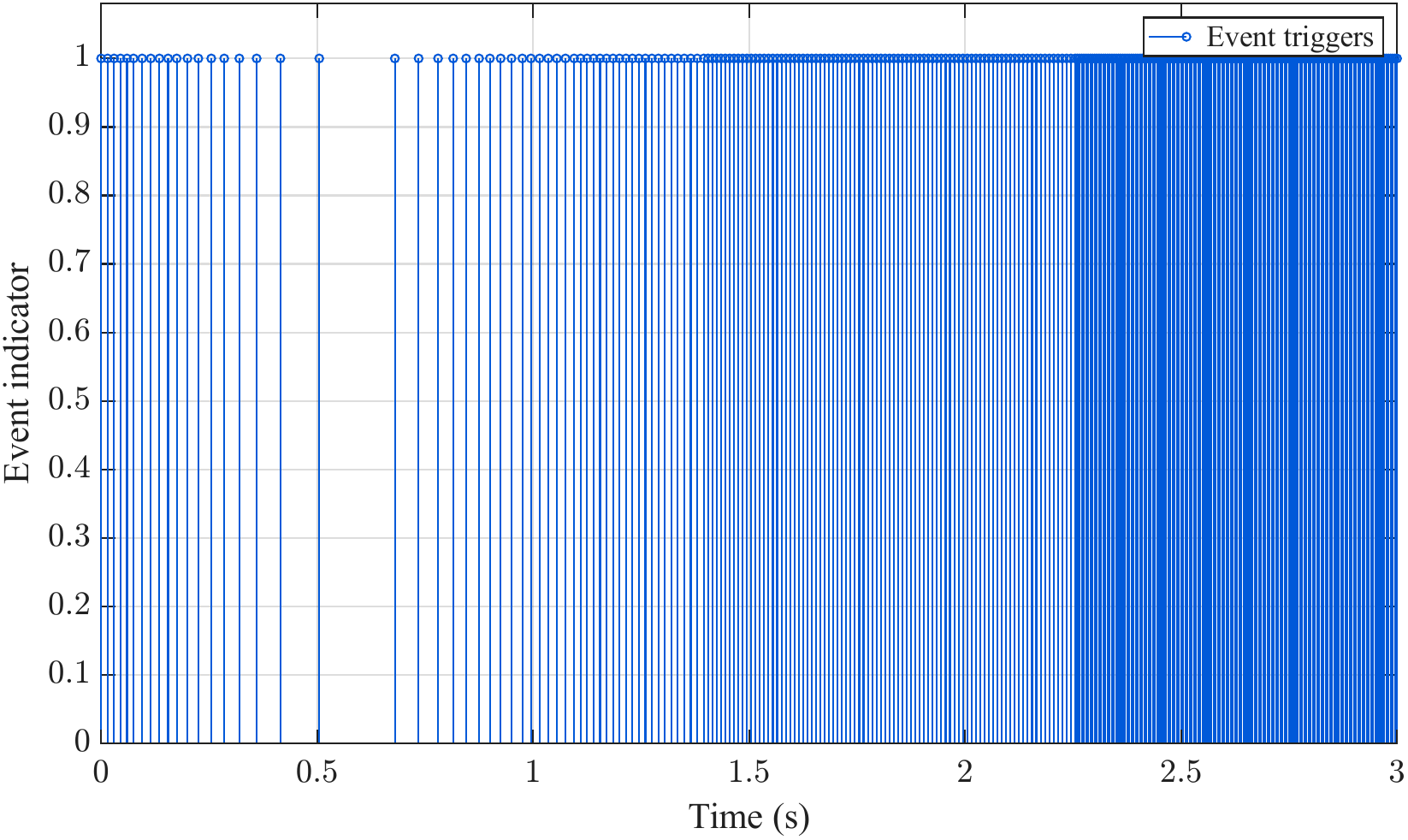}
    \caption{Event-triggering instants.}
    \label{fig:event_instants}
\end{figure}

\begin{figure}[H]
    \centering
    \includegraphics[
        width=0.88\columnwidth,
        height=0.14\textheight
    ]{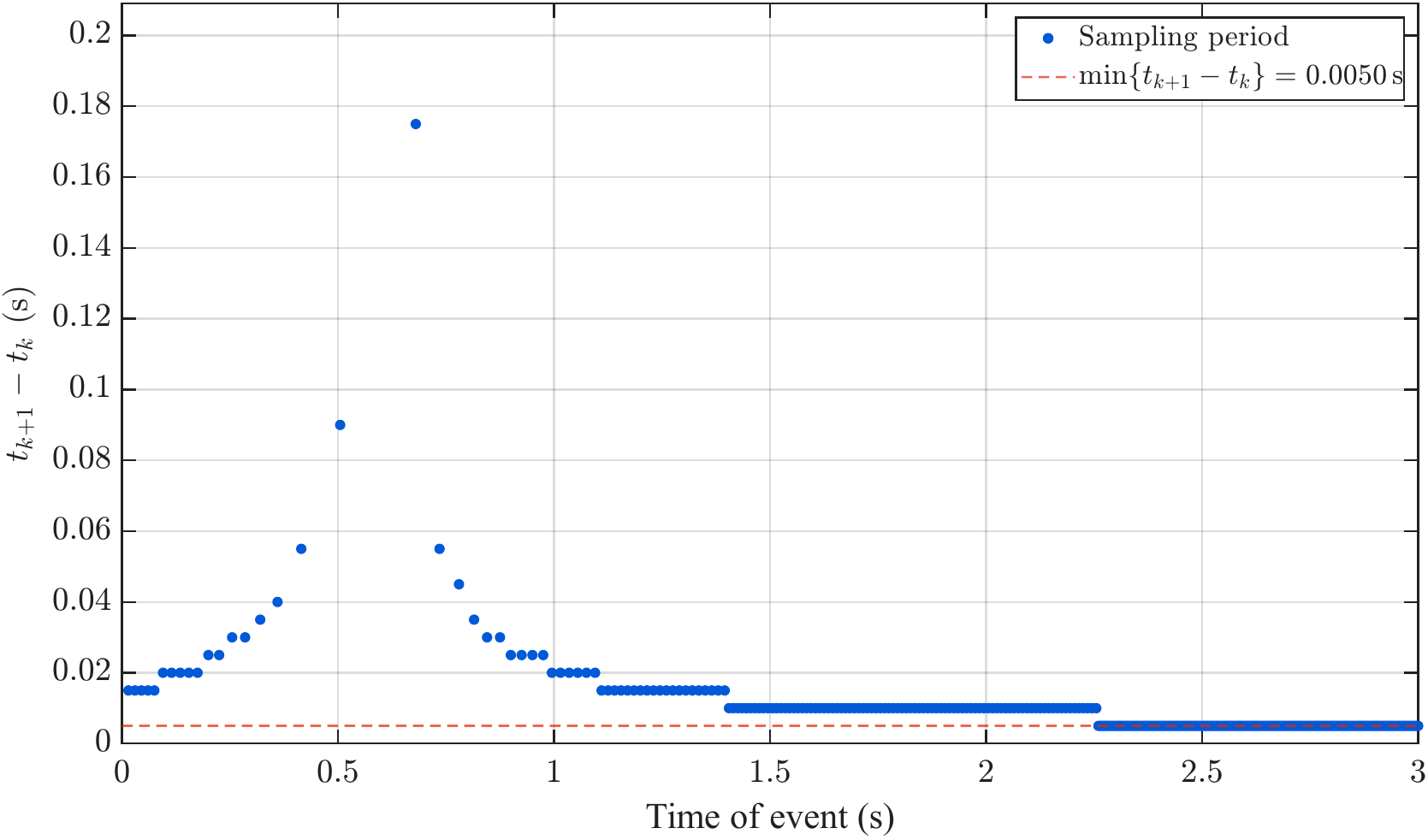}
    \caption{Inter-event times generated by the event-triggered implementation.}
    \label{fig:inter_event_times}
\end{figure}


The numerical results are consistent with the theoretical developments. 
Fig.~\ref{fig:identification_performance} shows that the drift-reconstruction error decreases from approximately \(1.1\times10^{-4}\) to the order of \(10^{-5}\) within \(2\)--\(3\,\mathrm{s}\), and remains at this level thereafter, confirming accurate finite-data identification before the inverse-optimal and IRL stages.

Fig.~\ref{fig:closed_loop_response} illustrates the fixed-time behavior for five initial conditions with \(\|\bx(0)\|\) ranging from approximately \(0.5\) to \(3.0\). All trajectories reach a small neighborhood of the origin within about \(4\,\mathrm{s}\), including the largest initial condition, consistent with the theoretical bound \(T_{\max}=4\,\mathrm{s}\). The comparable settling times across different initial conditions further support the fixed-time property.

Fig.~\ref{fig:event_control_input} shows that the event-triggered sample-and-hold input closely tracks the continuously evaluated policy. The first input reaches approximately \(-0.97\) near \(0.5\,\mathrm{s}\) and approaches zero by about \(2.7\,\mathrm{s}\), while the inset shows a discrepancy of only the order of \(10^{-3}\), indicating a small event-induced implementation error.

In Fig.~\ref{fig:critic_weights}, the first three critic weights converge to approximately \(0.5\), while the remaining weights converge to zero, yielding
\(
\bW^*=[0.5,\,0.5,\,0.5,\,0,\,0,\,0]^\top
\),
which exactly represents
\(
V_d(\bx)=\frac12\bx^\top\bx
\)
under the selected basis. The main critic transient disappears within approximately \(15\)--\(20\,\mathrm{s}\), while Fig.~\ref{fig:value_function_error} shows that the value-function error becomes visually negligible within about \(2\,\mathrm{s}\) for all five cases.

Figs.~\ref{fig:trigger_condition}--\ref{fig:inter_event_times} show that the triggering quantities decay with the closed-loop and learning transients, while the observed inter-event times remain positive, ranging from approximately \(5\times10^{-3}\,\mathrm{s}\) to \(1.7\times10^{-1}\,\mathrm{s}\). Since the minimum equals the integration step, the simulation only indicates the absence of numerical event accumulation; the uniform positive lower bound required for Zeno exclusion follows analytically from Lemma~\ref{lem:non_zeno_et}.


\section{Conclusion}
\label{sec:conclusion}

This paper developed an event-triggered fixed-time integral reinforcement
learning framework for optimal control of unknown nonlinear systems.
The unknown drift was reconstructed from integral data and used to build
an inverse-optimal fixed-time control problem. A critic-only learning law
with experience replay achieved practical fixed-time convergence without
requiring persistent excitation, while the event-triggered implementation
preserved practical fixed-time closed-loop stability and excluded Zeno
behavior. Simulation results verified the effectiveness of the proposed
framework.

\section*{Acknowledgment}
We acknowledge Ho Chi Minh City University of Technology (HCMUT), VNU-HCM for supporting this study.


\bibliographystyle{IEEEtran}
\bibliography{references}

@article{Doya2000,
  author  = {Kenji Doya},
  title   = {Reinforcement Learning in Continuous Time and Space},
  journal = {Neural Computation},
  volume  = {12},
  number  = {1},
  pages   = {219--245},
  year    = {2000},
  doi     = {10.1162/089976600300015961}
}

@article{Vrabie2009,
  author  = {Draguna Vrabie and Octavian Pastravanu and Murad Abu-Khalaf and Frank L. Lewis},
  title   = {Adaptive Optimal Control for Continuous-Time Linear Systems Based on Policy Iteration},
  journal = {Automatica},
  volume  = {45},
  number  = {2},
  pages   = {477--484},
  year    = {2009},
  doi     = {10.1016/j.automatica.2008.08.017}
}

@article{Vamvoudakis2010,
  author  = {Kyriakos G. Vamvoudakis and Frank L. Lewis},
  title   = {Online Actor--Critic Algorithm to Solve the Continuous-Time Infinite Horizon Optimal Control Problem},
  journal = {Automatica},
  volume  = {46},
  number  = {5},
  pages   = {878--888},
  year    = {2010},
  doi     = {10.1016/j.automatica.2010.02.018}
}

@article{Polyakov2012,
  author  = {Andrey Polyakov},
  title   = {Nonlinear Feedback Design for Fixed-Time Stabilization of Linear Control Systems},
  journal = {IEEE Transactions on Automatic Control},
  volume  = {57},
  number  = {8},
  pages   = {2106--2110},
  year    = {2012},
  doi     = {10.1109/TAC.2011.2179869}
}

@article{Modares2014,
  author  = {Hamidreza Modares and Frank L. Lewis and Mohammad-Bagher Naghibi-Sistani},
  title   = {Integral Reinforcement Learning and Experience Replay for Adaptive Optimal Control of Partially-Unknown Constrained-Input Continuous-Time Systems},
  journal = {Automatica},
  volume  = {50},
  number  = {1},
  pages   = {193--202},
  year    = {2014},
  doi     = {10.1016/j.automatica.2013.09.043}
}

@article{Han2024ETIRL,
  author  = {Xiumei Han and Xudong Zhao and Ding Wang and Bohui Wang},
  title   = {Event-Triggered-Based Online Integral Reinforcement Learning for Optimal Control of Unknown Constrained Nonlinear Systems},
  journal = {International Journal of Control},
  volume  = {97},
  number  = {2},
  pages   = {213--225},
  year    = {2024},
  doi     = {10.1080/00207179.2022.2137852}
}

@article{Hong2025DETADP,
  author  = {Yishen Hong and Shan Xue and Derong Liu and Yonghua Wang},
  title   = {Dynamic Event-Triggering Adaptive Dynamic Programming for Robust Stabilization of Partially Unknown Nonlinear Systems},
  journal = {Neurocomputing},
  volume  = {628},
  pages   = {129673},
  year    = {2025},
  doi     = {10.1016/j.neucom.2025.129673}
}

@article{Zhang2026ETFixedTime,
  author  = {Xingyi Zhang and Hui Ma and Hongru Ren and Qi Zhou},
  title   = {Event-Triggered Optimal Tracking Control of Uncertain Nonlinear Systems with Fixed-Time Prescribed Performance},
  journal = {International Journal of Control},
  volume  = {99},
  number  = {2},
  pages   = {245--257},
  year    = {2026},
  doi     = {10.1080/00207179.2025.2512399}
}

@article{Wan2021CPS,
  author  = {Yan Wan and Tao Yang and Ye Yuan and Frank L. Lewis},
  title   = {Optimal Control and Learning for Cyber-Physical Systems},
  journal = {International Journal of Robust and Nonlinear Control},
  volume  = {31},
  number  = {6},
  pages   = {1799--1802},
  year    = {2021},
  doi     = {10.1002/rnc.5442}
}

@article{Lee2015CPS,
  author  = {Edward A. Lee},
  title   = {The Past, Present and Future of Cyber-Physical Systems:
             A Focus on Models},
  journal = {Sensors},
  volume  = {15},
  number  = {3},
  pages   = {4837--4869},
  year    = {2015},
  doi     = {10.3390/s150304837}
}

@article{Jimenez2017Predefined,
  author  = {Esteban Jim{\'e}nez-Rodr{\'i}guez
             and Juan D. S{\'a}nchez-Torres
             and Alexander G. Loukianov},
  title   = {On Optimal Predefined-Time Stabilization},
  journal = {International Journal of Robust and Nonlinear Control},
  volume  = {27},
  number  = {17},
  pages   = {3620--3642},
  year    = {2017},
  doi     = {10.1002/rnc.3757}
}

@inproceedings{Kokolakis2023FixedTime,
  author    = {Nick-Marios T. Kokolakis
               and Kyriakos G. Vamvoudakis
               and Wassim M. Haddad},
  title     = {Fixed-Time Learning for Optimal Feedback Control},
  booktitle = {ASME 2023 International Design Engineering Technical
               Conferences and Computers and Information in Engineering
               Conference},
  year      = {2023},
  doi       = {10.1115/DETC2023-117007}
}

@article{Kokolakis2025Predefined,
  author  = {Nick-Marios T. Kokolakis
             and Kyriakos G. Vamvoudakis
             and Wassim M. Haddad},
  title   = {Predefined-Time Reinforcement Learning for Optimal
             Feedback Control},
  journal = {Automatica},
  volume  = {179},
  pages   = {112421},
  year    = {2025},
  doi     = {10.1016/j.automatica.2025.112421}
}

@article{AbuKhalaf2005,
  author  = {Murad Abu-Khalaf and Frank L. Lewis},
  title   = {Nearly Optimal Control Laws for Nonlinear Systems with
             Saturating Actuators Using a Neural Network HJB Approach},
  journal = {Automatica},
  volume  = {41},
  number  = {5},
  pages   = {779--791},
  year    = {2005},
  doi     = {10.1016/j.automatica.2004.11.034}
}

@misc{VuDoan2026PredefinedResilientIRL,
  author        = {Tien Dat Vu and Minh Doan},
  title         = {Predefined-Time Resilient Integral Reinforcement Learning for Input-Constrained Unknown Nonlinear Systems Under FDI Attacks and Disturbances: A Fully Data-Driven Approach},
  year          = {2026},
  eprint        = {2609.11815},
  archivePrefix = {arXiv},
  primaryClass  = {eess.SY},
  note          = {arXiv:2609.11815}
}

\end{document}